\theoremstyle{theorem}
\newtheorem{conjecture}[theorem]{Conjecture}
\newtheorem{proposition}[theorem]{Proposition}
\newcommand{\comment}[1]{}
\title{Finding Short Synchronizing Words for Prefix Codes}
\titlerunning{Finding Short Synchronizing Words for Prefix Codes}%optional, please use if title is longer than one line
\author{Andrew Ryzhikov}{Universit\'e Paris-Est, LIGM, Marne-la-Vall\'ee, France}{ryzhikov.andrew@gmail.com}{}{}%{https://orcid.org/0000-0002-1825-0097}{[funding]}%mandatory, please use full name; only 1 author per \author macro; first two parameters are mandatory, other parameters can be empty.
\author{Marek Szyku\l a}{Institute of Computer Science, University of Wroc\l aw, Wroc\l aw,  Poland}{msz@cs.uni.wroc.pl}{}{Supported in part by the National Science Centre, Poland under project numbers 2017/25/B/ST6/01920 and 2014/15/B/ST6/00615}%{[orcid]}{[funding]}
\authorrunning{A. Ryzhikov and M. Szyku\l a}%mandatory. First: Use abbreviated first/middle names. Second (only in severe cases): Use first author plus 'et al.'
\subjclass{Theory of computation $\rightarrow$ Formal languages and automata}% mandatory: Please choose ACM 2012 classifications from https://www.acm.org/publications/class-2012 or https://dl.acm.org/ccs/ccs_flat.cfm . E.g., cite as "General and reference $\rightarrow$ General literature" or \ccsdesc[100]{General and reference~General literature}. 
\keywords{synchronizing word, mortal word, avoiding word, Huffman decoder, inapproximability}%mandatory
\begin{document}

\maketitle

\begin{abstract}
We study the problems of finding a shortest synchronizing word and its length for a given prefix code. This is done in two different settings: when the code is defined by an arbitrary decoder recognizing its star and when the code is defined by its literal decoder (whose size
is polynomially equivalent to the total length of all words in the code). For the first case for every $\varepsilon > 0$ we prove $n^{1 - \varepsilon}$-inapproximability for recognizable binary maximal prefix codes, $\Theta(\log n)$-inapproximability for finite binary maximal prefix codes and $n^{\frac{1}{2} - \varepsilon}$-inapproximability for finite binary prefix codes. By $c$-inapproximability here we mean the non-existence of a $c$-approximation polynomial time algorithm under the assumption P $\ne$ NP, and by $n$ the number of states of the decoder in the input. For the second case, we propose approximation and exact algorithms and conjecture that for finite maximal prefix codes the problem can be solved in polynomial time. We also study the related problems of finding a shortest mortal and a shortest avoiding word.
\end{abstract}

\section{Introduction}

Prefix codes are a simple and powerful class of variable-length codes that are widely used in information compression and transmission. A famous example of prefix codes is Huffman's codes \cite{Huffman1952}. In general, variable length codes are not resistant to errors, since one deletion, insertion or change of a symbol can desynchronize the decoder causing incorrect decoding of all the remaining part of the message. However, in a large class of codes called synchronizing codes resynchronization of the decoder is possible in such situations. It is known that almost all maximal finite prefix codes are synchronizing \cite{Freiling2003}. Synchronization of finite prefix codes has been investigated a lot \cite{Berlinkov2016,  Biskup2008Thesis, Biskup2009, Capocelli1992, Schutzenberger1964, Schutzenberger1967}, see also the book \cite{Berstel2010} and references therein. For efficiency reasons, it is important to use as short words resynchronizing the decoder as possible to decrease synchronization time. However, despite the interest in synchronizing prefix codes, the computational complexity of finding short synchronizing words for them has not been studied so far. In this paper, we provide a systematic investigation of this topic.

Each recognizable (by a finite automaton) maximal prefix code can be represented by an automaton decoding the star of this code. For a finite code, this automaton can be exponentially smaller than the representation of the code by listing all its words (consider, for example, the code of all words of some fixed length). This can, of course, happen even if the code is synchronizing. An important example here is the code $0\{0, 1\}^{n - 1} \cup 1\{0, 1\}^n$. The minimized decoder of this code is a famous Wielandt automaton with $n + 1$ states (see ex. \cite{Ananichev2010}), while the literal automaton contains $2^{n - 1} + 2^n$ states, see Figure \ref{fig-wielandt} for the case $n = 3$. In different applications, the first or the second way of representing the code can be useful. In some cases large codes having a short description may be represented by a minimized decoder, while in other applications the code can be described by simply providing the list of all codewords. The number of states of the literal decoder is equal to the number of different prefixes of the codewords, and thus the representations of a prefix code by listing all its codewords and by providing its literal automaton are polynomially equivalent. We study the complexity of problems for both arbitrary and literal decoders of finite prefix codes.

\begin{figure}[hbt]
\includegraphics{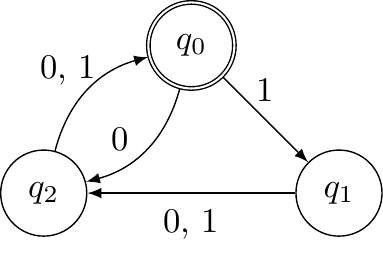}
\includegraphics{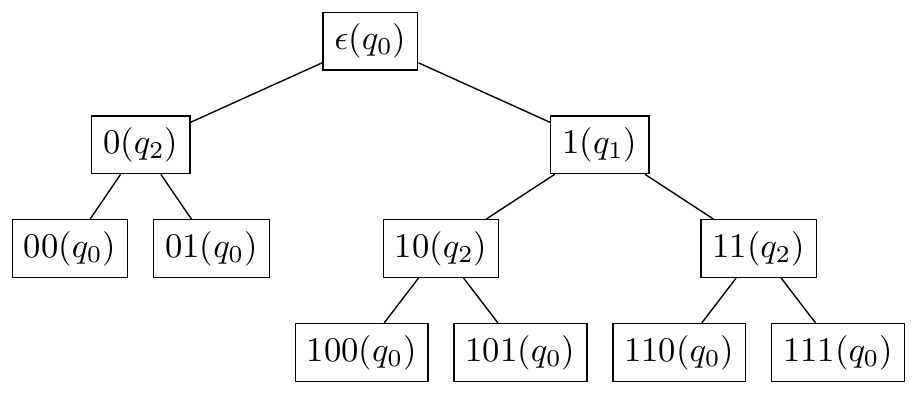}
 \caption{The Wielandt automaton on three states and the tree of the code $0\{0, 1\} \cup 1\{0, 1\}^2$.}\label{fig-wielandt}
\end{figure}

In this paper we study the existence of approximation algorithms for the problem {\sc Short Sync Word} of finding a shortest synchronizing words in several classes of deterministic automata decoding prefix codes. In Section \ref{sect-defs} we describe main definitions and survey existing results in the computational complexity of {\sc Short Sync Word}. In Section \ref{sect-sc} we provide a strong inapproximability result for this problem in the class of strongly connected automata. Section \ref{sect-acyclic} is devoted to the same problem in acyclic automata, which are then used in Section \ref{sect-decoders} to show logarithmic inapproximability of {\sc Short Sync Word} in the class of Huffman decoders. In Section \ref{sect-partial} we provide a much stronger inapproximability result for partial Huffman decoders. In Section \ref{sect-literal} we provide several algorithms for literal Huffman decoders and conjecture that {\sc Short Sync Word} can be solved in polynomial time in this class. Finally, in Section \ref{sect-mortal} we apply the developed techniques to the problems of finding shortest mortal and avoiding words.

\section{Main Definitions and Related Results} \label{sect-defs}

A {\em partial deterministic finite automaton} (which we simply call a {\em partial automaton} in this paper) is a triple $A = (Q, \Sigma, \delta)$, where $Q$ is a set of states, $\Sigma$ is a finite alphabet and $\delta: Q \times \Sigma \to Q$ is a (possibly incomplete) transition function. The function delta can be canonically extended to a function $\delta: Q \times \Sigma^* \to Q$ by defining $\delta(q, wx) = \delta(\delta(q, w), x)$ for $x \in \Sigma$, $w \in \Sigma^*$. If $\delta$ is a complete function, the automaton is called {\em complete} (in this case we call it just an {\em automaton}). An automaton is called {\em strongly connected} if for every ordered pair $q, q'$ of states there is a word mapping $q$ to $q'$.

A state in a partial automaton is called {\em sink} if each letter either maps the state to itself or is undefined. A {\em simple cycle} in a partial automaton $A = (Q, \Sigma, \delta)$ is a sequence $q_1, \ldots, q_k$ of its states such that all the states in the sequence are different and there exist letters $x_1, \ldots, x_k \in \Sigma$ such that $\delta(q_i, x_i) = q_{i + 1}$ for $1 \le i \le k - 1$ and $\delta(q_k, x_k) = q_1$. A simple cycle is a {\em self-loop} if it consists of only one state. We call a partial automaton {\em weakly acyclic} if all its cycles are self-loops, and {\em strongly acyclic} if moreover all its states with self-loops are sink states. Some properties of these automata have been studied in \cite{Ryzhikov2017}.

There is a strong relation between partial automata and prefix codes \cite{Berstel2010}. A set $X$ of words is called a {\em prefix code} if no word in $X$ is a prefix of another word. The class of recognizable (by an automaton) prefix codes can be described as follows. Take a strongly connected partial automaton $A$ and pick a state $r$ in it. Then the set of all {\em first return words} of $r$ (that is, words mapping $r$ to itself such that each non-empty prefix does not map $r$ to itself) is a recognizable prefix code. Moreover, each recognizable prefix code can be obtained this way. A prefix code is called {\em maximal} if it is not a subset of another prefix code. The class of maximal recognizable prefix codes corresponds to the class of complete automata. If a state $r$ can be picked in an automaton in such a way that the set of all first return words is a finite prefix code, we call the automaton a {\em partial Huffman decoder}. If such automaton is complete (and thus the finite prefix code is maximal), we call it simply a {\em Huffman decoder}.

Let $A$ be a partial automaton. A word $w$ is called {\em synchronizing} for $A$ if there exists a state $q$ such that $w$ maps each state of $A$ either to $q$ or the mapping of $w$ is undefined for this state, and there is at least one state such that the mapping of $w$ is defined for it. That is, a word is called synchronizing if it maps the whole set of states of the automaton to a set of size exactly one. An automaton having a synchronizing word is called {\em synchronizing}. A recognizable prefix code is {\em synchronizing} if a trim (partial) automaton recognizing the star of this code is synchronizing \cite{Berstel2010} (an automaton is called {\em trim} if there exists a state such that each state is accessible from this state, and there exists a state such that each state is coaccessible from this state). It can be checked in polynomial time that a strongly connected partial automaton is synchronizing (Proposition 3.6.5 of \cite{Berstel2010}).

Synchronizing automata have applications in different domains, such as synchronizing codes, symbolic dynamics, manufacturing and testing of reactive systems. They are also the subject of the \v{C}ern\'{y} conjecture, one of the main open problems in automata theory. It stays that every $n$-state synchronizing automaton has a synchronizing word of length at most $(n - 1)^2$, while the best known upper bounds are cubic \cite{Pin19832, Szykula2018Cerny}. See \cite{Volkov2008} for a survey on this topic. The upper bound on the length of a shortest synchronizing word has been improved in particular for Huffman decoders \cite{Beal2011} and further for literal Huffman decoders \cite{Berlinkov2016}.

We consider the following computational problem.

\begin{tabular}{||p{30em}}
	~{\sc Short Sync Word}\\
	~{\em Input}: A synchronizing partial automaton $A$;\\
	~{\em Output}: The length of a shortest synchronizing word for $A$.
\end{tabular}

Now we shortly survey existing results and techniques in the computational complexity and approximability of finding shortest synchronizing words for deterministic automata. To the best of our knowledge, there are no such results for partial automata. See \cite{Sipser2006} for an introduction to NP-completeness and \cite{Vazirani2001} for an introduction to inapproximability and gap-preserving reductions.

There exist several techniques of proving that {\sc Short Sync Word} is hard for different classes of automata. The very first and the most widely used idea is the one of Eppstein \cite{Eppstein1990}. Here, the automaton in the reduction is composed of a set of ``pipes'', and transitions define the way the active states are changed inside the pipes to reach the state where synchronization takes place. This idea (sometimes extended a lot) allows to prove NP-completeness of {\sc Short Sync Word} in the classes of strongly acyclic \cite{Eppstein1990}, ternary Eulerian \cite{Martyugin2012}, binary Eulerian \cite{Vorel2017Eul}, binary cyclic \cite{Martyugin2012} automata. This idea is also used in the proofs of \cite{Berlinkov2014Const} for inapproximability within arbitrary constant factor for binary automata, and for $n^{1 - \varepsilon}$-inapproximability for $n$-state binary automata \cite{Gawrychowski2015} (the last proof uses the theory of Probabilistically Checkable Proofs). In fact, the proof in \cite{Gawrychowski2015} holds true for binary automata with linear (in the number of states of the automaton) length of a shortest synchronizing word and a sink state.

Another idea is to construct a reduction from the {\sc Set Cover} problem. It can be used  to show logarithmic inapproximability of the {\sc Short Sync Word} in weakly acyclic \cite{Gerbush2011} and binary automata \cite{Berlinkov2014}. Finally, a reduction from {\sc Shortest Common Supersequence} provides inapproximability of this problem within a constant factor \cite{Gerbush2011}.

In the class of monotonic automata {\sc Short Sync Word} is solvable in polynomial time: because of the structure of these automata this problem reduces to a problem of finding a shortest words synchronizing a pair of states \cite{Ryzhikov2017Monotonic}. For general $n$-state automata, a $\lceil \frac{n - 1}{k - 1} \rceil$-approximation polynomial time algorithm exists for every $k$ \cite{Gerbush2011}.

\section{The Construction of Gawrychowski and Straszak} \label{sect-sc}

In this section we briefly recall the construction of a gadget invented by Gawrychowski and Straszak \cite{Gawrychowski2015} to show $n^{1 - \varepsilon}$-inapproximability of the {\sc Short Sync Word} problem in the general class of automata. Below we will use this construction several times.

Suppose that we have a constraint satisfiablity problem (CSP) with $N$ variables and $M$ constraints such that each constraint is satisfied by at most $K$ assignments (see \cite{Gawrychowski2015} for the definitions and missing details). Following the results in \cite{Gawrychowski2015}, we can assume that $N, K \le M^{\varepsilon}$, and also that either the CSP is satisfiable, or at most $\frac{1}{M^{1 - \varepsilon}}$ fraction of all constraints can be satisfied by an assignment. It is possible to construct the following ternary automaton $A_{\phi}$ in polynomial time. For each constraint $C$ the automaton $A_{\phi}$ contains a corresponding binary (over $\{0, 1\}$) gadget $T_C$ which is a compressed tree (that is, an acyclic digraph) of height $N$ and the number of states at most $N^2 K$ having different leaves corresponding to satisfying and non-satisfying assignments. The automaton $A_{\phi}$ also contains a sink state $s$ such that all the leaves corresponding to satisfying assignments are mapped to $s$, and all other leaves are mapped to the roots of the corresponding trees. The third letter is defined to map all the states of each gadget to its root and to map $s$ to itself. For every $\varepsilon > 0$ it is possible to construct such an automaton with at most $MN^2K \le M^{1 + 3\varepsilon}$ states in polynomial time. Moreover, for a satisfiable CSP we get an automaton with a shortest synchronizing word of length at most $N + 1 = O(M^\varepsilon)$, and for a non-satisfiable CSP the length of a shortest synchronizing word is at least $NM^{1 - \varepsilon} \ge M^{1 - \varepsilon}$. Since $\varepsilon$ can be chosen arbitrary small, this provides a gap-preserving reduction with a gap of $M^{1 - \varepsilon}$.

The described construction can be modified to get the same inapproximability in the class of strongly connected automata.

\begin{theorem} \label{tm-sc}
	The {\sc Short Sync Word} problem cannot be approximated in polynomial time within a factor of $n^{1 - \varepsilon}$ for every $\varepsilon > 0$ for $n$-state binary strongly connected automata unless P = NP.
\end{theorem}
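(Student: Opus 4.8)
The plan is to reuse the ternary automaton $A_\phi$ of Gawrychowski and Straszak together with its gap property, namely a shortest synchronizing word of length $O(M^\varepsilon)$ when the CSP is satisfiable and of length at least $M^{1-\varepsilon}$ otherwise, and to apply two structural modifications: one that makes the automaton strongly connected and one that reduces its alphabet to $\{0,1\}$. Both modifications are designed so that the number of states grows only by a factor $n^{O(\varepsilon)}$ and the length of a shortest synchronizing word is preserved up to a multiplicative constant (the satisfiable direction being immediate, the unsatisfiable direction being the content of the argument). Then the gap $M^{1-\varepsilon}$ over $M^{1+O(\varepsilon)}$ states translates into an $n^{1-\varepsilon'}$ gap for an $\varepsilon'$ that can be made arbitrarily small, and since the whole construction is a gap-preserving reduction, P $\ne$ NP rules out an $n^{1-\varepsilon'}$-approximation.

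The first modification makes the automaton strongly connected. The only obstruction in $A_\phi$ is the sink state $s$, through which moreover the gadget trees communicate only indirectly. I would add a single auxiliary letter $\gamma$ that cyclically permutes the set $\{s, r_1, \ldots, r_M\}$ consisting of $s$ and the roots $r_1,\dots,r_M$ of the gadgets, and acts as the identity on every other state. Since every root can descend to any node of its own gadget along $0$ and $1$, every gadget reaches $s$ through one of its satisfying leaves, and $\gamma$ lets one travel from $s$ to any root, the resulting automaton is strongly connected. Crucially, $\gamma$ is a permutation of the whole state set and therefore never decreases the size of the active set; the only transitions that merge states remain the edges from satisfying leaves into $s$. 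Hence any synchronizing word must still collapse the entire state set onto $s$, which forces all gadgets to be realigned along a single common satisfying assignment.

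The second modification reduces the alphabet to $\{0,1\}$. The gadget trees are already binary, so only the reset letter and the auxiliary letter $\gamma$ have to be simulated. I would encode each remaining letter by a binary word of constant length, inserting a small decoding gadget of $O(1)$ fresh states at each state so that reading the correct binary block reproduces the original transition; this multiplies the number of states and the lengths of synchronizing words by constants, both harmless for the $n^{1-\varepsilon'}$ bound. Alternatively one can implement the reset action directly over $\{0,1\}$ by attaching auxiliary \emph{funnel} states that drive every gadget state back to its root in $O(1)$ extra steps.

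The main obstacle is the lower-bound half of the gap after the strong-connectivity step: the letter $\gamma$ gives the adversary extra freedom, and one must show it yields no shortcut for the unsatisfiable instance. I expect the bulk of the work to be an argument, via a potential function counting the gadgets still carrying active states, that because $\gamma$ only permutes states the active set can shrink only through satisfying leaves into $s$; a short synchronizing word would then produce an assignment satisfying almost all constraints simultaneously, contradicting the soundness guarantee that at most a $1/M^{1-\varepsilon}$ fraction of the constraints can be satisfied at once. A secondary technical point is to ensure that the binary-encoding gadgets cannot be exploited by words misaligned with the encoding block boundaries to synchronize faster; this is handled by arranging that every misaligned block acts as the identity on the relevant states or routes them into the permutation without merging.
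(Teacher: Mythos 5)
Your proposal takes essentially the same route as the paper's proof: augment $A_\phi$ with an extra letter that cyclically permutes the gadget roots to force strong connectivity (the paper's letter $c$ sends $s$ to a root rather than putting it inside the cycle, an immaterial difference), argue that this letter creates no synchronization shortcut so the $M^{1-\varepsilon}$ gap survives, and then binarize by block-encoding the letters with small decoding gadgets, which is exactly what the paper does by invoking Lemma 3 of Berlinkov (2014). The paper is in fact terser than you on the soundness step (it merely says ``observe''), so your sketch matches its argument and level of rigor.
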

\begin{proof}
	Consider the automaton $A_{\phi}$ described above. Add a new letter $c$ that cyclically permutes the roots of all gadgets, maps $s$ to the root of one of the gadgets and acts as a self-loop for all the remaining states. Observe that thus constructed automaton $A$ is strongly connected and has the property that every non-satisfying assignment satisfies at most the fraction of $\frac{1}{M^{1 - \varepsilon}}$ of all constraints. Thus, the gap between the length of a shortest synchronizing word for a satisfying and non-satisfying assignment is still $\Theta(M^{1 - \varepsilon})$. 
	
	It remains to make the automaton binary. This can be done by using Lemma 3 of \cite{Berlinkov2014}. This way we get a binary automaton with $\Theta(M^{1 + 3\varepsilon})$ states and a gap between $\Theta(M^\varepsilon)$ and $\Theta(M^{1 - \varepsilon})$ in the length of a shortest synchronizing word. By choosing appropriate small enough $\varepsilon$, we get a reduction with gap $n^{1 - \varepsilon}$ for binary strongly connected $n$-state automata, which proves the statement.
\end{proof}

\section{Acyclic Automata} \label{sect-acyclic}

In this section we investigate the simply-defined classes of weakly acyclic and strongly acyclic automata. The results for strongly acyclic automata are used in Section \ref{sect-decoders} to obtain inapproximability for Huffman decoders. Even though the automata in the classes of weakly and strongly acyclic automata are very restricted and have a very simple structure, the inapproximability bounds for them are quite strong. Thus we believe that these classes are of independent interest.

We will need the following problem.

\begin{tabular}{||p{32em}}
	~{\sc Set Cover} \\
	~{\em Input}: A set $X$ of $p$ elements and a family $C$ of $m$ subsets of $X$;\\
	~{\em Output}: A subfamily of $C$ of minimum size covering $X$.
\end{tabular}

A family $C'$ of subsets of $X$ is said to {\em cover} $X$ if $X$ is a subset of the union of the sets in $C'$. For every $\gamma > 0$, the {\sc Set Cover} problem with $|C| \le |X|^{1 + \gamma}$ cannot be approximated in polynomial time within a factor of $c' \log p$ for some $c' > 0$ unless P = NP \cite{Berlinkov2014}.  

\begin{theorem} \label{sa-set-cover}
	The {\sc Short Sync Word} problem cannot be approximated in polynomial time within a factor of $c \log n$ for some $c > 0$ for $n$-state strongly acyclic automata over an alphabet of size $n^{1 + \gamma}$ for every $\gamma > 0$ unless $P = NP$.
\end{theorem}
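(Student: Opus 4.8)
The plan is to reduce from {\sc Set Cover} with the density restriction $|C| \le |X|^{1+\gamma}$, building a strongly acyclic automaton whose shortest synchronizing word encodes a minimum cover. The intuition is that each element of $X$ must be "killed" (collapsed toward a target state) by reading the letter corresponding to some set containing it, and a synchronizing word therefore selects a sequence of letters whose corresponding sets cover $X$. Since the {\sc Set Cover} instance is hard to approximate within $c' \log p$, and the length of the synchronizing word will be proportional to the cover size, this gives the desired $c\log n$ inapproximability.

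Concretely, I would let the alphabet $\Sigma$ be (essentially) the family $C$, so $|\Sigma| = m \le p^{1+\gamma}$, plus possibly one or two auxiliary letters. For each element $x \in X$ I introduce a state $q_x$, together with a single sink state $s$. For a letter $a$ corresponding to a set $S_a \in C$, I define $\delta(q_x, a) = s$ if $x \in S_a$, and $\delta(q_x, a) = q_x$ otherwise; the sink absorbs itself under every letter. Then a word $w = a_1 \cdots a_k$ is synchronizing precisely when every $q_x$ is eventually sent to $s$, i.e.\ when $\bigcup_i S_{a_i} = X$, so shortest synchronizing words correspond exactly to minimum covers, and $|w|$ equals the cover size. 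This construction is \emph{strongly acyclic}: the only cycles are self-loops (a letter either sends $q_x$ to the sink or fixes it), and the only states with self-loops are $q_x$ under letters avoiding $x$ and the sink $s$ --- so to meet the \emph{strongly acyclic} requirement, that every self-looping state be a sink, I must arrange that each non-sink state has at most one self-loop-free structure. This is the first technical point requiring care.

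The main obstacle is exactly this acyclicity constraint: in the naive construction above, a state $q_x$ has self-loops on all letters whose set omits $x$, but $q_x$ is not a sink, violating strong acyclicity. To fix this I would replace each single state $q_x$ by a short \emph{pipe} (a directed path) of length $m$, where reading any letter advances the active state one step along the pipe unless the letter's set contains $x$, in which case it jumps to $s$; at the end of the pipe a state falls into the sink under every letter. Thus every state either advances (no self-loop) or is a genuine sink, giving strong acyclicity, while the pipe length $m$ (polynomial) only adds an additive $O(m)$ to synchronization length, which is lower-order compared to the multiplicative $\log p$ gap and can be absorbed by choosing parameters appropriately. The number of states is then $n = O(pm) \le O(p^{2+\gamma})$, so $\log n = \Theta(\log p)$, and the density bound $|\Sigma| = m \le p^{1+\gamma} = n^{O(1)}$ translates into alphabet size $n^{1+\gamma'}$ for a suitable $\gamma'$.

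Finally I would verify that the reduction is gap-preserving. A cover of size $t$ yields a synchronizing word of length $t + O(m)$ by concatenating the corresponding letters (padded to flush the pipes), and conversely any synchronizing word of length $\ell$ induces a cover of size at most $\ell$ by collecting the sets of its letters; hence a $c\log n$-approximation for {\sc Short Sync Word} would yield a $c'\log p$-approximation for {\sc Set Cover}, contradicting P $\ne$ NP for suitable $c$. The step I expect to need the most attention is balancing the pipe length against the desired gap: the additive $O(m)$ overhead must not swamp the logarithmic ratio, so I would track the constants to confirm that for the hard {\sc Set Cover} instances (where optimal covers have size $\Theta(\log p)$ in the gap) the construction still separates the yes- and no-cases by a factor $\Theta(\log n)$.
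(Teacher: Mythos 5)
Your construction is essentially the paper's: a reduction from density-restricted \textsc{Set Cover} in which each element gets a pipe of states, non-covering letters advance the pipe, and covering letters send every pipe state to the sink (the paper uses pipes of length $p$ rather than $m$, which is immaterial). However, your analysis of this construction has a genuine flaw. You claim the yes-case synchronizing word has length $t + O(m)$, obtained by ``padding to flush the pipes,'' and you defer the worry about this additive term to ``tracking constants.'' Tracking constants cannot save the argument as stated: the no-case optimum in your automaton is capped at the pipe length (any word of length $m$ synchronizes by flushing every pipe into the sink, regardless of which sets its letters correspond to), so with a yes-case bound of $t + O(m)$ the ratio between the no-case and yes-case is $O(1)$ and the logarithmic gap vanishes entirely. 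The resolution is that no padding is needed at all: in your own construction a covering letter sends \emph{every} state of the corresponding pipe to $s$, whatever its position, so a cover of size $t$ yields a synchronizing word of length exactly $t$. Your stated bound is not merely lossy --- with it, the reduction proves nothing.

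Two further points need repair. First, your converse claim that ``any synchronizing word of length $\ell$ induces a cover of size at most $\ell$'' is false once $\ell$ reaches the pipe length, again because flushing synchronizes without covering; you need the caveat that this holds only for $\ell$ smaller than the pipe length, together with a fallback in the long case (the paper outputs a trivial cover of size at most $p$, which is still within the approximation factor precisely because a long optimal synchronizing word forces the cover optimum to be large). Second, your parenthetical assumption that the hard \textsc{Set Cover} instances have optima of size $\Theta(\log p)$ is unfounded --- in the gap instances the optima are in general polynomially large --- but fortunately the corrected argument above does not use it.
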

\begin{proof}
	We reduce the {\sc Set Cover} problem. Provided $X$ and $C$, we construct the automaton $A = (Q, \Sigma, \delta)$ as follows. To each set $c_k$ in $C$ we assign a letter $k \in \Sigma$. To each element $x_j$ in $X$ we assign a ``pipe'' of states $q^{(j)}_1, \ldots, q^{(j)}_p$ in $Q$. Additionally, we construct a state $f$ in $Q$.
	
	For $1 \le i \le p - 1$ and all $k$ and $j$ we define $\delta(q^{(j)}_i, k) = f$ if $c_k$ contains $x_j$, and $\delta(q^{(j)}_i, k) = q^{(j)}_{i + 1}$ otherwise. We also define  $\delta(q^{(j)}_p, k) = f$ for all $j$ and $k$.
	
	We claim that the length of a shortest synchronizing word for $A$ is equal to the minimum size of a set cover in $C$. Let $C'$ be a set cover of minimum size. Then a concatenation of the letters corresponding to the elements of $C'$ is a synchronizing word of corresponding length.
	
	In the other direction, consider a shortest synchronizing word $w$ for $A$. No letter appears in $w$ at least twice. If the length of $w$ is less than $p$, then by construction of $A$ the subset of elements in $C$ corresponding to the letters in $w$ form a set cover. Otherwise we can take an arbitrary subfamily of $C$ of size $p$ which is a set cover (such subfamily trivially exists if $C$ covers $X$).
	
	The resulting automaton has $p^2 + 1$ states and $m$ letters. Thus we get a reduction with gap $c' \log p \ge c'' \log \sqrt{|Q|} = \frac{1}{2} c'' \log |Q|$ for some $c'' > 0$. Because of the mentioned result of Berlinkov, we can also assume that $m < p^{1 + \gamma}$ for arbitrary small $\gamma > 0$. 
\end{proof}

Now we are going to extend this result to the case of binary weakly acyclic automata.

\begin{corollary}
	The {\sc Short Sync Word} problem cannot be approximated in polynomial time within a factor of $c \log n$ for some $c > 0$ for $n$-state binary weakly acyclic automata unless $P = NP$.
\end{corollary}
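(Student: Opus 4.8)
The plan is to binarize the strongly acyclic automaton $A$ produced by Theorem~\ref{sa-set-cover}, whose shortest synchronizing word has length equal to the minimum size $\mu$ of a set cover (as long as $\mu<p$). Fix a uniform-length binary code for the $m$ letters: a set of codewords of length $\ell=\Theta(\log m)$, where $\ell=\Theta(\log p)$ since $m\le p^{1+\gamma}$. For each non-sink state $q$ of $A$ I would replace the whole bundle of $m$ outgoing transitions by the binary trie of this code: reading the codeword of a letter $k$ along a root-to-leaf path of length $\ell$ ends at $\delta(q,k)$ (that is, at the sink $f$ or at the next pipe state), while any bit that leaves the trie is left undefined. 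The only new states are the interior trie nodes; all their defined edges go strictly deeper, and the exit edges either reach $f$ or advance the pipe index, so no new nontrivial cycle is created and the resulting binary partial automaton $B$ is weakly acyclic. Working with partial automata here is exactly what lets me discard misaligned reads, and is the reason the target class is the weakly acyclic one.

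For the upper bound I would take a minimum set cover $c_{k_1},\dots,c_{k_\mu}$ and feed $B$ the concatenation of the corresponding codewords, of length $\ell\mu$. Read from any pipe root this drives the pipe to $f$ as soon as a covering letter is applied, so after $\ell\mu$ bits every root-started computation is at $f$; a computation started in the interior of a gadget either also reaches $f$ or runs off the trie and becomes undefined, which is permitted. Hence $B$ has a synchronizing word of length at most $\ell\mu+O(\ell)$.

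For the lower bound I would use that all pipe gadgets are \emph{structurally identical}: whether a block of $\ell$ bits is a valid codeword, and which letter it decodes to, depends only on the bits, not on the pipe. Consequently, reading any word from the set of all pipe roots, the roots stay perfectly block-aligned and each completed codeword acts as one letter of $A$ applied simultaneously to every pipe; an off-code block sends \emph{every} root to the undefined value at once and therefore cannot produce a singleton image. Thus a synchronizing word must use valid codewords to send all pipes to $f$, i.e.\ it must realize a set cover, and its length is at least $\ell\mu$. The uniform codeword length is what keeps this bound from degenerating into a weighted set cover and preserves the multiplicative gap: the yes/no lengths $\mu$ and $>(c'\log p)\mu$ of $A$ become $\Theta(\ell\mu)$ and $\Omega(\ell\,(c'\log p)\,\mu)$ in $B$, so the factor $c'\log p$ survives.

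Finally I would bound the size: $B$ has $O(p^2 m)=O(p^{3+\gamma})$ states, so $\log|Q_B|=\Theta(\log p)$ and the gap $c'\log p$ equals $\Theta(\log|Q_B|)$, giving the claimed $c\log n$ inapproximability for binary weakly acyclic automata. The main obstacle I anticipate is the lower bound's block-decomposition, and specifically ruling out that a binary word beats $\ell$ times the optimal cover by exploiting \emph{misaligned} decoding: a computation entering a gadget in its interior reads codewords shifted by its depth and could, a priori, masquerade as a shorter sequence of letter-applications. Choosing the code to be self-synchronizing (comma-free), so that every shifted read leaves the trie---and hence dies as undefined---within a single block of length $\ell$, is what I expect to make the interior states harmless and the simulation exactly block-by-block.
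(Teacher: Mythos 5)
Your proof has two genuine problems, one of classification and one of substance. First, under the paper's conventions an ``automaton'' with no qualifier means a \emph{complete} automaton (partiality is always flagged explicitly, as in Section~\ref{sect-partial}), so the corollary asserts hardness for complete binary weakly acyclic automata. The paper proves it by feeding the automaton of Theorem~\ref{sa-set-cover} into Lemma~3 of \cite{Berlinkov2014}, a black-box binarization for complete DFAs that preserves weak acyclicity, yields $4pk$ states, and rescales the shortest synchronizing word by a factor $\Theta(\log k)$. You instead build a \emph{partial} automaton and explicitly rely on partiality to discard misaligned reads. Hardness for the larger class of partial automata does not imply hardness for the complete subclass, so even if your argument were sound it would establish a strictly weaker statement than the corollary.

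Second, and fatally, the argument is not sound even for partial automata, because in this paper's setting a synchronizing word may map states to ``undefined'': it only needs the image of the whole state set to be a single state. In your automaton $B$ the sink $f$ survives every word, so any word that maps every other state either to $f$ or to nowhere is synchronizing. Your own fix creates exactly such a word. A comma-free code contains no periodic word (if $0^\ell$ were a codeword, $0^\ell 0^\ell$ would contain $0^\ell$ as an internal factor), so $0^\ell$ and all of its cyclic shifts (which again equal $0^\ell$) lie outside the code. Consequently, reading $0^\ell$ from any gadget root runs off the trie and dies; an interior node at depth $d$ reading $0^{2\ell}$ either dies inside its current block, or completes its codeword into $f$ (where it stays), or completes it into another gadget root and then dies on the next all-zero block. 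Hence $0^{2\ell}$, of length $O(\log m)$, synchronizes $B$ to $\{f\}$ for \emph{every} instance, so your claimed lower bound $\ell\mu$ --- ``a synchronizing word must realize a set cover'' --- is false, and the gap collapses entirely. This is the precise dilemma that makes the corollary nontrivial: to neutralize misaligned readers you introduce undefined transitions, but undefined transitions let a trivially short word neutralize the aligned readers too; whereas keeping the automaton complete (say, by using all $2^\ell$ blocks as codewords) revives the misalignment problem, since unkillable out-of-phase states then need $\Theta(p)$ blocks to reach $f$, ruining the yes-case upper bound. Berlinkov's lemma is engineered to escape exactly this trade-off, which is why the paper invokes it rather than a naive block encoding.
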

\begin{proof}
	We extend the construction from the proof of Theorem \ref{sa-set-cover} by using Lemma 3 of \cite{Berlinkov2014}. If we start with a strongly acyclic automaton with $p$ states and $k$ letters, this results in a binary weakly acyclic automaton with $4pk$ states. Moreover, the length of a shortest word of the new automaton is between $\ell (\log k + 1)$ and $(\ell + 1) (\log k + 1)$, where $\ell$ is the length of a shortest word of the original automaton. Since we can assume $p < k < p^{1 + \gamma}$ for arbitrary small $\gamma > 0$, we have $\log n = \Theta(\log p)$, where $n$ is the number of states of the new automaton. Thus we get a gap of $\Theta(\log p) = \Theta(\log n)$.
\end{proof}

For ternary strongly acyclic automata it is possible to get $(2-\varepsilon)$-inapproximability.

\begin{theorem} \label{sa-two}
	The {\sc Short Sync Word} problem cannot be approximated in polynomial time within a factor of $2 - \varepsilon$ for every $\varepsilon > 0$ for $n$-state strongly acyclic automata over an alphabet of size three unless $P = NP$.
\end{theorem}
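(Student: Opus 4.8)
The plan is to combine two ingredients: a source problem that is $(2-\varepsilon)$-hard for every $\varepsilon$, and an alphabet reduction to three letters that, unlike the binarization used in the previous corollary, preserves strong acyclicity. For the source I would take the minimum vertex cover problem in $3$-uniform hypergraphs, which is NP-hard to approximate within $2-\varepsilon$ for every $\varepsilon>0$ by the result of Dinur, Guruswami, Khot and Regev. This is exactly the {\sc Set Cover} problem in which every element lies in exactly three sets, so feeding it into the pipe construction of the proof of Theorem~\ref{sa-set-cover} yields a strongly acyclic automaton $B$ whose shortest synchronizing word has length equal to the optimal cover size; its only defect is that its alphabet is as large as the number of sets.

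The core of the proof is therefore a strong-acyclicity-preserving reduction from an arbitrary alphabet $\Sigma$ to the three-letter alphabet $\{0,1,\#\}$. I would encode every letter of $\Sigma$ by a distinct binary word of length $L=\lceil\log_2|\Sigma|\rceil$ and simulate one step of $B$ by reading one block, namely the codeword of the chosen letter followed by the separator $\#$. Concretely, each state $q$ of $B$ is replaced by a binary decoding tree of depth $L$ whose leaves are labelled by letters; reading the $L$ bits navigates this tree, and reading $\#$ afterwards moves to the block-start copy of $\delta_B(q,\sigma)$ for the decoded letter $\sigma$. If a bit is read where $\#$ is expected, or $\#$ where a bit is expected, the gadget treats the block as malformed and simply moves to the next block-start. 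Mapping the level of a state at depth $d$ in $B$ to the level interval $[\,d(L+1),(d+1)(L+1)\,)$ in the new automaton, every block strictly increases the level, so no self-loop or cycle on a non-sink state is ever created; this is precisely the point where the earlier binarization fails and where the third letter $\#$ is essential, since it lets an arbitrary starting state resynchronize to a block boundary.

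It then remains to translate lengths. Because each letter advances all tokens by exactly one level, a synchronizing word of the new automaton decomposes into $\#$-delimited blocks whose decoded letters must, read as a word of $B$, synchronize $B$; conversely a synchronizing word of $B$ of length $\ell$ lifts to one of length $(L+1)\ell$, plus an additive $O(L)$ term to absorb the partial first block seen by states that do not start at a block boundary. Hence the shortest synchronizing word of the new automaton has length $(L+1)\ell\pm O(L)$, where $\ell$ is that of $B$. Since the hard vertex-cover instances have optimum $\Theta(n)$, the additive $O(L)$ term is negligible against $(L+1)\cdot\Theta(n)$, so the multiplicative gap $2-\varepsilon$ survives and gives $(2-\varepsilon)$-inapproximability over $\{0,1,\#\}$ for strongly acyclic automata.

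The main obstacle I expect is keeping the construction simultaneously strongly acyclic, complete, and faithful under \emph{all} starting states: the problem requires a word mapping every state to one common sink, so I must check that malformed blocks and misaligned starting states still funnel to the unique sink $f$ within the claimed length, rather than creating a second sink or a shortcut that would collapse the gap. The rigid, leveled ``lockstep'' behaviour of strongly acyclic complete automata, in which every letter advances every token by exactly one level, is what makes this verification and the exact length accounting go through, but it is exactly the part that must be carried out with care.
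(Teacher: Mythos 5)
The paper states Theorem \ref{sa-two} \emph{without proof} (the argument is omitted from this version), so there is no paper proof to compare yours against; I can only assess your proposal on its own terms. Your overall plan is viable: the Dinur--Guruswami--Khot--Regev $(k-1-\varepsilon)$-hardness of vertex cover in $k$-uniform hypergraphs does give, at $k=3$, a $(2-\varepsilon)$-hard special case of {\sc Set Cover} with optimum $\Theta(n)$, and feeding it into the pipe construction of Theorem \ref{sa-set-cover} yields a strongly acyclic automaton whose shortest synchronizing word equals the optimum cover size. Your block encoding over $\{0,1,\#\}$ can also be made to work, and your choice of target for malformed reads is essentially forced: sending them to the sink lets the two-letter word $\#\#$ synchronize everything, and sending them back to the current block start creates a cycle, so they must go to the block start of the pipe-successor $\mathrm{succ}(q)$ (i.e.\ $q^{(j)}_{i+1}$, or $f$), as your levelling remark implicitly requires.

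Exactly at this point, however, your length accounting breaks, and this is a genuine gap---the very one you flag in your last paragraph without resolving. With malformed reads going to the successor's block start, every occurrence of $\#$ (and every bit read at depth $L$) advances \emph{every} surviving token by one pipe position in a single symbol. Hence the word $\#^{\,p}$, where $p$ is the pipe length (the number of hyperedges), synchronizes the encoded automaton regardless of the instance, so the shortest synchronizing word has length $\min\bigl(p,\ (L+1)(\mathrm{OPT}+O(1))\bigr)$ rather than $(L+1)\ell\pm O(L)$; your claim that every synchronizing word decomposes into well-formed blocks whose decoded letters synchronize $B$ is false. Since the hard instances guarantee only $\mathrm{OPT}=\Theta(n)$ in the number $n$ of vertices, while $p$ (the number of hyperedges) need not exceed $(1-\varepsilon)n(L+1)=\Theta(n\log n)$, this shortcut can collapse the $(2-\varepsilon)$ gap. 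The repair is simple but must be stated: pad the instance (duplicate hyperedges, equivalently lengthen the pipes) so that $p\ge 2n(L+1)$ before encoding; then the $\#$-shortcut is never the cheapest route, block-start tokens stay depth-aligned (the depth evolution depends only on the symbols read, not on the state), well-formed blocks occupy disjoint segments of the word, and the bound $\min(p,(L+1)\,\mathrm{OPT})$ restores the gap. A second, harmless, inaccuracy: your statement that every letter advances every token by exactly one level is false (a malformed read jumps $L+1-d$ levels from depth $d$); what holds, and what suffices for strong acyclicity, is that every transition out of a non-sink state strictly increases the level.
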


\section{Huffman Decoders} \label{sect-decoders}

We start with a statement relating strongly acyclic automata to Huffman decoders.

\begin{lemma} \label{sa-to-huff}
	Let $A$ be a synchronizing strongly acyclic automaton over an alphabet of size~$k$. Let $\ell$ be the length of a shortest synchronizing word for $A$. Then there exists a  Huffman decoder $A_H$ over an alphabet of size $k + 2$ with the same length of a shortest synchronizing word, and $A_H$ can be constructed in polynomial time.
\end{lemma}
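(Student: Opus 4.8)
The plan is to embed $A$ into a complete, strongly connected automaton $A_H$ in which some distinguished state $r$ has only finitely many first return words; by the characterisation recalled in Section~\ref{sect-defs}, such an $A_H$ is exactly a Huffman decoder. I must do this with only two extra letters, in polynomial time, and without changing the length $\ell$ of a shortest synchronizing word. Two structural facts guide the construction. Since $A$ is complete, a synchronizing strongly acyclic automaton has a \emph{unique} sink $f$: a sink is a trap on which every letter self-loops, two such traps could never be merged, so synchronization must collapse everything into a single sink $f$. Moreover, fixing a topological ordering $q_1,\dots,q_n=f$ with $f$ last, every letter of $A$ sends each non-sink state strictly forward, i.e.\ to a state of strictly larger index; thus a shortest synchronizing word of $A$ is just a word flushing all states forward.

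I would take $r:=f$ as the root and keep all transitions of $A$ unchanged; the only obstruction to $A_H$ being a decoder is that $f$ is a trap, so $A_H$ is not yet strongly connected and $f$ has no nontrivial first return word. I remove this obstruction with the two new letters, each realised as the forward cyclic shift $q_i\mapsto q_{i+1}$, $q_n=f\mapsto q_1$. This single cycle makes $A_H$ strongly connected. Crucially, in $A_H$ every transition that does not start at $f$ goes to a state of strictly larger index, so any directed cycle must contain a transition out of $f$ and hence pass through $r=f$. Both new letters are permutations of the state set, so neither ever identifies two states.

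Granting this, $A_H$ is complete and strongly connected, and deleting $r$ leaves an acyclic graph, so $r$ has only finitely many first return words; hence $A_H$ is a Huffman decoder over $k+2$ letters, plainly constructible in polynomial time. The easy half of the length statement is the upper bound: a shortest synchronizing word $w$ of $A$, read as a word over the original alphabet, synchronizes $A_H$ as well, since $A_H$ has the same states and the same original transitions as $A$. Therefore $A_H$ has a synchronizing word of length $\ell$.

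The main obstacle is the matching lower bound: no synchronizing word of $A_H$ may be shorter than $\ell$, that is, the two new letters must not help synchronize faster. This is the real crux. Because the shift letters are permutations, the number of active states can drop only at the original letters; and because the original letters move every state strictly forward while the only backward move of a shift letter is to eject $f$ back to $q_1$, a shift can never rearrange the active set so as to let a subsequent original letter merge more states than the plain forward flush of $A$ already does. The plan is to turn this into a transformation that takes an arbitrary synchronizing word $w_H$ of $A_H$ and removes or bypasses its shift-letter occurrences without increasing its length, producing a synchronizing word of $A$ of length at most $|w_H|$ and hence $\ell\le|w_H|$. Making this extraction precise---verifying that eliminating the shift letters never forces a compensating increase in length---is the delicate part of the proof; everything else is routine verification.
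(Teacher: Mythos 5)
Your construction is genuinely different from the paper's, but it does not work, and the step you yourself flag as ``the delicate part'' is not merely unfinished --- it is false. The shift letters \emph{can} speed up synchronization, because they re-position the whole active set relative to the merging pattern of the original letters, and because in your $A_H$ the final merged state no longer has to be $f$. Concretely, take the Set Cover automaton from Theorem~\ref{sa-set-cover} for the instance with elements $\{0,1,\dots,p-1\}$ ($p$ even), the set $B$ of all even elements, and the singletons $\{1\},\{3\},\dots,\{p-1\}$: states $q^{(j)}_i$ (element $j$, level $i\in\{1,\dots,p\}$) plus the sink $f$; the letter $k_B$ maps $q^{(j)}_i$ to $f$ when $j$ is even and to $q^{(j)}_{i+1}$ otherwise (level $p$ always goes to $f$), and similarly for the singleton letters. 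Here $\ell=1+p/2$, the minimum cover size. Your proof fixes an arbitrary topological order, so take the level-major one: $q^{(0)}_1,q^{(1)}_1,\dots,q^{(p-1)}_1,q^{(0)}_2,\dots,f$. The shift $\sigma$ then maps $q^{(j)}_i$ to $q^{(j+1)}_i$, with $q^{(p-1)}_i\mapsto q^{(0)}_{i+1}$ and $f\mapsto q^{(0)}_1$. Now the word $k_B\,\sigma\,k_B$ of length $3$ synchronizes $A_H$: after $k_B$ the active set consists of odd-pipe states at levels $\ge 2$ together with $f$; applying $\sigma$ moves every surviving pipe state into an \emph{even} pipe (or to $f$) and resurrects $f$ into pipe $0$; a second $k_B$ then sends everything to $f$. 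So $A_H$ has a synchronizing word of length $3$ while $\ell=1+p/2$ is unbounded --- the length is not preserved even up to a constant factor, so the Lemma's conclusion fails for your $A_H$.

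The root cause is that your new letters are permutations acting on \emph{all} active states simultaneously, so one shift can ``rename'' a large surviving set into the kill-zone of a single original letter; this is exactly the scenario your intuition (``a shift can never let a subsequent original letter merge more states than the plain forward flush of $A$'') rules out incorrectly. Note also that your planned extraction argument cannot exist in any form, since its conclusion ($\ell\le|w_H|$) is contradicted above. The paper's construction is immune to this failure mode for a structural reason worth internalizing: it adds \emph{new states} (a binary tree leading from $f$ back to the source states) and defines the two new letters to act on the original state set $Q$ exactly as a fixed already-existing letter of $\Sigma$, so the new letters contribute no new merging power on $Q$; the only new dynamics (descending the tree from $f$) moves a single already-merged blob, never the whole active set. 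If you wanted to salvage a no-new-states construction, you would have to choose the topological order adaptively (the element-major order defeats this particular counterexample) and then actually prove the lower bound for that order in general --- which is precisely the argument that is missing, and it is unclear that it can be made.
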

\begin{proof}
	Provided a strongly acyclic automaton $A = (Q, \Sigma, \delta)$ we construct a Huffman decoder $A_H = (Q_H, \Sigma_H, \delta_H)$.
	
	Since $A$ is a synchronizing strongly acyclic automaton, it has a unique sink state $f$. We define the alphabet $\Sigma_H$ as the union of $\Sigma$ with two additional letters $b_1, b_2$. The set of states $Q_H$ is the union of $Q$ with some auxiliary states defined as follows. Consider the set $S$ of states in $A$ having no incoming transitions. Construct a full binary tree with the root $f$ having $S$ as the set of its leaves (if $|S|$ is not a power of two, some subtrees of the tree can be merged). Define $b_1$ to map each state of this tree to the left child, and $b_2$ to the right child. Transfer the action of $\delta$ to $\delta_H$ for all states in $Q$ and all letters in $\Sigma$. For all the internal states of the tree define all the letters of $\Sigma$ to map these states to $f$. Finally, for all the states in $Q$ define the action of $b_1, b_2$ in the same way as some fixed letter in $\Sigma$.
	
	Observe that any word $w$ over alphabet $\Sigma$ synchronizing $A$ also synchronizes $A_H$. In the other direction, any synchronizing word for $A_H$ has to synchronize $Q$, which means that each state in $Q$ has to be mapped to $f$ first, so the length of a shortest synchronizing word for $A_H$ is at least the length of a shortest synchronizing word for $A$.
\end{proof}

Now we use Lemma \ref{sa-to-huff} to get preliminary inapproximability results for Huffman decoders.

\begin{corollary}\label{codes-weak}
	(i) The {\sc Short Sync Word} problem is NP-complete for Huffman decoders over an alphabet of size $4$.
	
	(ii) The {\sc Short Sync Word} problem cannot be approximated in polynomial time within a factor of $2 - \varepsilon$ for every $\varepsilon > 0$ for Huffman decoders over an alphabet of size $5$ unless $P = NP$.
	
	(iii) For every $\gamma > 0$, the {\sc Short Sync Word} problem cannot be approximated in polynomial time within a factor of $c \log n$ for some $c > 0$ for Huffman decoders over an alphabet of size $n^{1 + \gamma}$ unless $P = NP$.
\end{corollary}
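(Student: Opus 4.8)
The plan is to obtain all three statements by feeding known hardness results for strongly acyclic automata into the reduction of Lemma~\ref{sa-to-huff}. That lemma turns any synchronizing strongly acyclic automaton over a $k$-letter alphabet into a Huffman decoder over a $(k+2)$-letter alphabet with \emph{exactly} the same length of a shortest synchronizing word, in polynomial time. Hence every lower bound on the approximability (or on the decision complexity) of {\sc Short Sync Word} for strongly acyclic automata transfers verbatim to Huffman decoders, at the cost of two extra letters, provided the construction does not blow up the number of states too much.

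For part~(i), I would start from the NP-completeness of {\sc Short Sync Word} for binary strongly acyclic automata due to Eppstein~\cite{Eppstein1990}. Applying Lemma~\ref{sa-to-huff} with $k = 2$ yields Huffman decoders over an alphabet of size $4$ with the same shortest-synchronizing-word length, which preserves the NP-hardness of the decision version. Membership in NP is immediate, since a Huffman decoder is a complete automaton, so a shortest synchronizing word has polynomially bounded length and a candidate word can be guessed and verified in polynomial time. Part~(ii) is analogous: I would invoke Theorem~\ref{sa-two}, which gives $(2-\varepsilon)$-inapproximability for strongly acyclic automata over an alphabet of size three, and apply Lemma~\ref{sa-to-huff} with $k = 3$. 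Because the lemma keeps the optimal length unchanged, the multiplicative gap $2-\varepsilon$ is preserved exactly, now for Huffman decoders over a $5$-letter alphabet. Part~(iii) follows the same pattern from Theorem~\ref{sa-set-cover}, taking $k$ equal to the alphabet size $n^{1+\gamma}$ used there.

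The one point that needs checking---and essentially the only place anything can go wrong---is the state count, which matters for the logarithmic gap in part~(iii). The auxiliary states added by Lemma~\ref{sa-to-huff} form a binary tree whose leaves are the source states of $A$; since there are at most $|Q|$ such leaves, the tree contributes only $O(|Q|)$ internal states, so the Huffman decoder $A_H$ has $n' = \Theta(n)$ states. Consequently $\log n' = \Theta(\log n)$, and the $c\log n$ gap of Theorem~\ref{sa-set-cover} survives as a $c'\log n'$ gap for some $c' > 0$. The alphabet stays of the right order as well, since the two additional letters and the constant-factor rescaling of $n$ are absorbed by relaxing the exponent, giving alphabet size bounded by $(n')^{1+\gamma'}$ for an arbitrarily small $\gamma'$ once $n'$ is large enough. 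For parts~(i) and~(ii) the alphabet is constant and the linear growth of the state set is harmless, so no further bookkeeping is required. I expect the only mild subtlety to be confirming that the base NP-completeness result is stated for a binary strongly acyclic alphabet, which is what pins the alphabet size in~(i) down to exactly $4$.
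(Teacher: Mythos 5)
Your proposal is correct and follows exactly the paper's own route: feed Eppstein's strongly acyclic NP-completeness construction, Theorem~\ref{sa-two}, and Theorem~\ref{sa-set-cover} into Lemma~\ref{sa-to-huff}, which preserves the shortest-synchronizing-word length exactly while adding two letters. The extra bookkeeping you supply (NP membership, the $\Theta(n)$ state count of the added tree, and the alphabet-exponent adjustment for part~(iii)) is sound and merely makes explicit what the paper leaves implicit.
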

\begin{proof}
	(i) The automaton in the Eppstein's proof of NP-completeness of {\sc Short Sync Word} \cite{Eppstein1990} is strongly acyclic. Then the reduction described in Lemma \ref{sa-to-huff} can be applied.
	
	(ii) A direct consequence of Theorem \ref{sa-two} and Lemma \ref{sa-to-huff}.
	
	(iii) A direct consequence of Theorem \ref{sa-set-cover} and Lemma \ref{sa-to-huff}.
\end{proof}

Now we show how to get a better inapproximability result for binary Huffman decoders using the composition of synchronizing prefix codes. We present a more general result for the composition of synchronizing codes which is of its own interest. This result shows how to change the size of the alphabet of a synchronizing complete code in such a way that the approximate length of a shortest synchronizing pair for it is preserved.

A set $X$ of words over an alphabet $\Sigma$ is a {\em code} if no word can be represented as a concatenation of elements in $X$ in two different ways. In particular, every prefix code is a code. A pair $(\ell_X, r_X)$ of words in $X^*$ is called {\em absorbing} if $\ell_X \Sigma^* \cap \Sigma^* r_X  \subseteq X^*$. The {\em length} of a pair is the total length of two word. A code $X$ over an alphabet $\Sigma$ is called {\em complete} if every word $w \in \Sigma^*$ is a factor of some word in $X^*$, that is, if for every word $w \in \Sigma^*$ there exist words $v_1, v_2 \in \Sigma^*$, $u \in X^*$ such that $v_1 w v_2 = u$. In particular, every maximal (by inclusion) code is complete. A complete code having an absorbing pair is called {\em synchronizing}. We refer to \cite{Berstel2010} for a survey on the theory of codes.

Let $Y$ be a code over $\Sigma_Y$ and $Z$ be a code over $\Sigma_Z$. Suppose that there exists a bijection $\beta: \Sigma_Y \to Z$. The {\em composition} $Y \circ_\beta Z$ is then defined as the code $X = \{\beta(y) \mid y \in Y\}$ over the alphabet $\Sigma_Z$ \cite{Berstel2010}. Here $\beta(y)$ is defined as $\beta(y_1)\beta(y_2)\ldots\beta(y_k)$ for $y = y_1 y_2 \ldots y_k$, $y_i \in \Sigma_Y$. Sometimes $\beta$ is omitted in the notation of composition.

\begin{theorem} \label{thm-composition}
	Let $Y$ and $Z$ be two synchronizing complete codes, such that $Z$ is finite and $m$ and $M$ are the lengths of a shortest and a longest codeword in $Z$. Suppose that the composition $Y \circ Z$ is defined. Then the code $X = Y \circ Z$ is synchronizing, and the length of a shortest absorbing pair for $X$ is between $m \ell$ and $2M \ell + 2c$, where $\ell$ is the length of a shortest absorbing pair for $Y$ and $c$ is the length of a shortest absorbing pair for $Z$. 
\end{theorem}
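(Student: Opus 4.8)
The plan is to exploit the tight correspondence between $X^{*}$ and $Y^{*}$ induced by the morphism $\beta$. First I would record the basic algebraic facts. Since $Z$ is a code, the morphism $\beta \colon \Sigma_Y^{*} \to \Sigma_Z^{*}$ is injective, so it restricts to a length-nondecreasing bijection between $Y^{*}$ and $X^{*}$; more precisely $X^{*} = \beta(Y^{*}) \subseteq Z^{*}$, and $m|u| \le |\beta(u)| \le M|u|$ for every $u \in \Sigma_Y^{*}$. The key membership criterion is that a word $w \in \Sigma_Z^{*}$ lies in $X^{*}$ if and only if $w \in Z^{*}$ and the unique word $\beta^{-1}(w) \in \Sigma_Y^{*}$ obtained by decoding the $Z$-factorisation of $w$ lies in $Y^{*}$. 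Thus synchronising $X$ splits into two nested tasks: forcing a word into $Z^{*}$, and then forcing its $\Sigma_Y$-decoding into $Y^{*}$.

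I would next dispose of completeness and the lower bound. For completeness of $X$ (needed for it to be \emph{synchronising}), take $w \in \Sigma_Z^{*}$; completeness of $Z$ makes $w$ a factor of some $v \in Z^{*} = \beta(\Sigma_Y^{*})$, and writing $v = \beta(s)$ and using completeness of $Y$ to embed $s$ as a factor of a word $psq \in Y^{*}$, the word $\beta(psq) \in X^{*}$ has $w$ as a factor. For the lower bound $m\ell$, let $(\ell_X, r_X)$ be any absorbing pair of $X$ and write $\ell_X = \beta(u)$, $r_X = \beta(v)$ with $u, v \in Y^{*}$. I would show $(u,v)$ is absorbing for $Y$: if $w' \in u\Sigma_Y^{*} \cap \Sigma_Y^{*} v$, applying $\beta$ gives $\beta(w') \in \ell_X \Sigma_Z^{*} \cap \Sigma_Z^{*} r_X \subseteq X^{*} = \beta(Y^{*})$, and injectivity of $\beta$ yields $w' \in Y^{*}$. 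Hence $|u| + |v| \ge \ell$, so $|\ell_X| + |r_X| \ge m(|u|+|v|) \ge m\ell$.

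For the upper bound I would build an absorbing pair of the shape $\ell_X = \beta(\ell_Y)\,\sigma_1\,\beta(r_Y)$ and $r_X = \beta(\ell_Y)\,\sigma_2\,\beta(r_Y)$, where $\sigma_1, \sigma_2$ are short words assembled from $\ell_Z$ and $r_Z$. Each such word decodes to a $\Sigma_Y$-word of the form $\ell_Y(\cdots)r_Y$, which lies in $Y^{*}$ by the absorbing property of $(\ell_Y, r_Y)$, so both $\ell_X$ and $r_X$ genuinely lie in $X^{*}$ and the pair witnesses that $X$ is synchronising; the two flanking blocks contribute at most $2M\ell$ and the $\sigma_i$ at most $2c$ in total. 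To verify the absorbing property, take any $w$ with prefix $\ell_X$ and suffix $r_X$. The occurrence of $\ell_Z$ inside $\sigma_1$ and the occurrence of $r_Z$ inside $\sigma_2$ delimit a factor of $w$ that starts with $\ell_Z$ and ends with $r_Z$, hence lies in $Z^{*}$ by the absorbing property of $(\ell_Z, r_Z)$; flanking it by $\beta(\ell_Y), \beta(r_Y) \in X^{*} \subseteq Z^{*}$ shows $w \in Z^{*}$. Decoding, $\beta^{-1}(w)$ begins with $\ell_Y$ and ends with $r_Y$, so it lies in $Y^{*}$ and therefore $w \in X^{*}$.

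The step I expect to be the main obstacle is the decoding alignment in this last verification: concluding that the prefix $\beta(\ell_Y)$ and suffix $\beta(r_Y)$ of $w$ translate into a \emph{genuine} prefix $\ell_Y$ and suffix $r_Y$ of $\beta^{-1}(w)$. This requires that the ends of these blocks fall on $Z$-factorisation boundaries of $w$. For prefix codes this is automatic, since $Z^{*}$ is right unitary and no codeword can straddle such a boundary; in the general case it is precisely what the synchronising words $\ell_Z, r_Z$ must be arranged to guarantee, by placing them at the interfaces between the $\beta$-blocks and the surrounding material so that a $Z$-boundary is forced there. Making this bracketing simultaneously (i) expose $\ell_Z, r_Z$ as a true prefix/suffix of an internal factor, (ii) keep $\ell_X, r_X$ inside $X^{*}$, and (iii) respect the length budget is the delicate part, and it is what pins down the factor $2M$ together with the additive $2c$.
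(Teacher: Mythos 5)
Your proposal is correct and takes essentially the same route as the paper: the paper's absorbing pair is exactly your construction with $\sigma_1 = \sigma_2 = \ell_Z r_Z$ (both components of the pair equal to $\beta(\ell_Y)\,\ell_Z r_Z\,\beta(r_Y)$), its lower bound is the same pull-back through $\beta^{-1}$ using injectivity, and the ``decoding alignment'' you flag as the delicate step resolves itself immediately --- once the middle factor $v$ is forced into $Z^*$ by the absorbing property of $(\ell_Z, r_Z)$, uniqueness of $Z$-factorisations gives $\beta^{-1}(w) = \ell_Y\,\beta^{-1}(v)\,r_Y$ with $\ell_Y$ and $r_Y$ as genuine prefix and suffix, which is how the paper's chain of inclusions $\beta(\ell_Y)\,Z^*\,\beta(r_Y) = \beta(\ell_Y \Sigma_Y^* r_Y) \subseteq \beta(Y^*) = X^*$ handles it. The only substantive addition in your write-up is the explicit verification that $X$ is complete, which the paper leaves implicit.
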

\begin{proof}
	Let $Y \subseteq \Sigma_Y^*$, $X, Z \subseteq \Sigma_Z^*$, and $\beta: \Sigma_Y \to Z$ be such that $X = Y \circ_\beta Z$. First, assume that $Y$ and $Z$ are synchronizing, and let $(\ell_Y, r_Y)$, $(\ell_Z, r_Z)$ be shortest absorbing pairs for $Y$ and $Z$. Then $\ell_Y \Sigma_Y^* \cap \Sigma_Y^* r_Y  \subseteq Y^*$ and $\ell_Z \Sigma_Z^* \cap \Sigma_Z^* r_Z \subseteq Z^*$. We will show that $p_1 = (\beta(\ell_Y) \ell_Z r_Z \beta(r_Y), \beta(\ell_Y) \ell_Z r_Z \beta(r_Y))$ is an absorbing pair for $X$. Consider the set $\beta(\ell_Y) \ell_Z r_Z \beta(r_Y) \Sigma_Z^* \cap \Sigma_Z^* \beta(\ell_Y) \ell_Z r_Z \beta(r_Y)$. It is a subset of the set $\beta(\ell_Y) (\ell_Z \Sigma_Z^* \cap \Sigma_Z^* r_Z) \beta(r_Y) \subseteq \beta(\ell_Y) Z^*  \beta(r_Y) = \beta(\ell_Y \Sigma_Y^* r_Y) \subseteq \beta(Y^*) = X^*$. Thus, $p_1$ is an absorbing pair for $X$. Moreover, the length of this pair is between $2m (|\ell_Y| + |r_Y|) + 2(|\ell_Z| + |r_Z|)$ and $2M (|\ell_Y| + |r_Y|) + 2(|\ell_Z| + |r_Z|)$.
	
	Conversely, assume that $(\ell_X, r_X)$ is a shortest absorbing pair for $X$, hence $\ell_X \Sigma_Z^* \cap \Sigma_Z^* r_X \subseteq X^*$. Then by the definition of composition $X^* \subseteq Z^*$ and $\ell_X, r_X \in Z^*$; thus, $(\ell_X, r_X)$ is also absorbing for $Z$. Next, let $\ell_Y = \beta^{-1}(\ell_X)$, $r_Y = \beta^{-1}(r_X)$, $\ell_Y, r_Y \in Y^*$. Then $\beta(\ell_Y \Sigma_Y^* \cap \Sigma_Y^* r_Y) = \ell_X Z^* \cap Z^* r_X \subseteq \ell_X \Sigma_Z^* \cap \Sigma_Z^* r_X \subseteq X^* = \beta(Y^*)$. Since the mapping $\beta$ is injective,  $\ell_Y B^* \cap B^* r_Y \subseteq Y^*$. Consequently $Y$ is synchronizing, and $(\ell_Y, r_Y)$ is an absorbing pair for it of length between $\frac{1}{M} (|\ell_X| + |r_X|)$ and $\frac{1}{m} (|\ell_X| + |r_X|)$.
	
	Summarizing, we get that the length of a shortest absorbing pair for $X$ is between $m (|\ell_Y| + |r_Y|)$ and $2M (|\ell_Y| + |r_Y|) + 2(|\ell_Z| + |r_Z|)$.
\end{proof}

\begin{figure}[!h]
	\centering
	\includegraphics{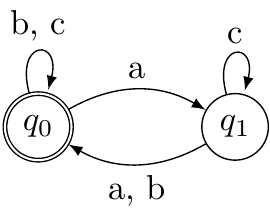}
	\hspace{0.3cm}
	\includegraphics{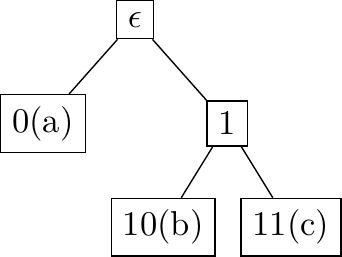}
	\includegraphics{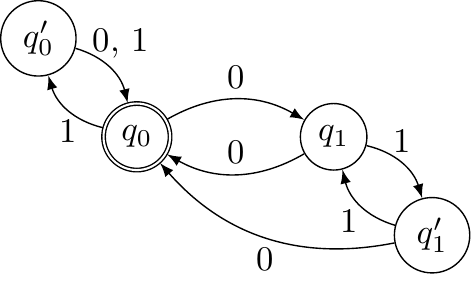}
	\caption{An automaton recognizing some infinite maximal prefix code, the tree of a finite maximal prefix code and an automaton recognizing their composition.}\label{fig-composition}
\end{figure}

In the case of maximal prefix codes the first element of the absorbing pair can be taken as an empty word. For recognizable maximal prefix codes $Y$ and $Z$, where $Z$ is finite, a Huffman decoder recognizing the star of $X = Y \circ Z$ can be constructed as follows. Let $H_Y$ be a Huffman decoder for $Y$. Consider the full tree $T_Z$ for $Z$, where each edge is marked by the corresponding letter. For each state $q$ in $H_Y$ we substitute the transitions going from this state with a copy of $T_Z$ as follows. The root of $T_Z$ coincides with $q$, and the inner vertices are new states of the resulting automaton. Suppose that $v$ is a leaf of $T_Z$, and the path from the root to $v$ is marked by a word $w$. Let $a$ be the letter of the alphabet of $H_Y$ which is mapped to the word $w$ in the composition. Then the image of $q$ under the mapping defined by $a$ is merged with $v$. In such a way we get a Huffman decoder with $\Theta(n_Y n_Z)$ states, where $n_Y, n_Z$ is the number of states in $H_Y$ and $T_Z$. By the definition of composition, this decoder has the same alphabet as $Z$. See Figure \ref{fig-composition} for an example.

\begin{corollary} \label{binary-huffman}
	The {\sc Short Sync Word} problem cannot be approximated in polynomial time within a factor of $c \log n$ for some $c > 0$ for binary $n$-states Huffman decoders unless $P = NP$. 
\end{corollary}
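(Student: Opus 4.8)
The plan is to reduce the alphabet size from $n^{1+\gamma}$ (as in Corollary~\ref{codes-weak}(iii)) down to $2$ while losing only a constant factor in the approximation gap, by composing with a fixed binary code and invoking Theorem~\ref{thm-composition}. Concretely, I would start from the Huffman decoder $H_Y$ produced by Corollary~\ref{codes-weak}(iii): it is synchronizing and recognizes the star of a finite maximal (hence complete) synchronizing prefix code $Y$ over an alphabet $\Sigma_Y$ of size $k = n^{1+\gamma}$, and the length of its shortest synchronizing word is hard to approximate within $c'\log n$. I then fix a binary synchronizing maximal prefix code $Z$ with exactly $|\Sigma_Y| = k$ codewords and form the composition $X = Y \circ_\beta Z$ along a bijection $\beta\colon\Sigma_Y\to Z$. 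A composition of maximal prefix codes is again a maximal prefix code, here binary, so by the construction recalled just before the statement a binary Huffman decoder $A_X$ recognizing the star of $X$ can be built in polynomial time with $\Theta(n_Y n_Z)$ states; since $n_Z = \Theta(k)$ this yields $N = \Theta(n^{2+\gamma})$ states, whence $\log N = \Theta(\log n)$.

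The quantitative heart is Theorem~\ref{thm-composition}, which bounds the shortest absorbing pair of $X$ between $m\ell$ and $2M\ell + 2c_Z$, where $\ell$ is the shortest absorbing pair of $Y$, $c_Z$ is that of $Z$, and $m, M$ are the shortest and longest codeword lengths of $Z$. To preserve a multiplicative gap I need $Z$ to be \emph{balanced}, i.e. $M/m = O(1)$, and to have a \emph{short} absorbing pair, $c_Z = O(\log k) = O(M)$; then $2M\ell + 2c_Z = O(M\ell)$ for every $\ell \ge 1$, so the shortest absorbing pair of $X$ equals $\Theta(\log k)\cdot\ell$ up to a constant factor. Such a $Z$ is easy to obtain: take a near-complete binary tree with $k$ leaves, so that all codeword lengths lie in $\{\lfloor\log_2 k\rfloor, \lceil\log_2 k\rceil\}$, and perturb it locally so that its decoder admits a reset word of length $O(\log k)$. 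Already small cases such as $\{00,01,10,110,111\}$, synchronized by $10110$, illustrate the mechanism, and this generalizes while keeping the depth logarithmic.

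To pass between absorbing pairs and the {\sc Short Sync Word} objective I would use the observation that for the decoder of a maximal prefix code the shortest synchronizing word synchronizes to the root: if some word sends all states to a nonempty prefix $u$, then since every non-root state has in-degree one (its only predecessor being its parent in the code tree), the last $|u|$ letters must spell $u$, and deleting them produces a strictly shorter word synchronizing to the root. Hence the shortest synchronizing word of $A_X$ has length exactly the shortest absorbing pair of $X$ with empty first component, and likewise for $H_Y$, so no additive correction is incurred. Combining this with the bound above, a $(c\log N)$-approximation algorithm for the shortest synchronizing word of $A_X$ would yield, after dividing by $m$ and rounding, an $O(c\log n)$-approximation for $H_Y$, and choosing $c$ small enough contradicts Corollary~\ref{codes-weak}(iii). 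I expect the main obstacle to be the construction of $Z$: one must simultaneously guarantee that it is synchronizing, maximal, has the prescribed number $k$ of codewords, keeps $M/m$ bounded, and admits a reset word of length $O(\log k)$; a code with a long shortest reset word, such as the Wielandt code $0\{0,1\}^{d-1}\cup 1\{0,1\}^{d}$ whose reset word has length $\Theta(\log^2 k)$, would make the additive term $2c_Z$ swamp the gap precisely when $\ell$ is small.
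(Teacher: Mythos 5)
Your overall plan---start from Corollary~\ref{codes-weak}(iii), compose with a $k$-codeword binary maximal prefix code $Z$, apply Theorem~\ref{thm-composition}, and use the $\Theta(n_Y n_Z)$-state decoder construction---is exactly the paper's route, and your bookkeeping (state count $\Theta(n^{2+\gamma})$, $\log N = \Theta(\log n)$, synchronization to the root for maximal codes) matches. But you diverge at the one step that carries all the difficulty, and there your argument has a genuine gap. You insist that $Z$ must have a reset word of length $c_Z = O(\log k)$ \emph{and} codeword lengths confined to $\{\lfloor\log_2 k\rfloor, \lceil\log_2 k\rceil\}$, and you defer the construction of such a $Z$ (``I expect the main obstacle to be the construction of $Z$''). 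That deferred object does not exist. If all codewords of a maximal binary prefix code have lengths in $\{d, d+1\}$, consider any word $w$ synchronizing its decoder to the root: every state $u$ of depth $\delta$ first returns to the root at a position $i_u$ with $u\,w[1..i_u] \in Z$, hence $i_u \in \{d-\delta, d+1-\delta\}$, and the remaining suffix $w[i_u+1..|w|]$ must lie in $Z^*$. Since states of every depth $0,\dots,d$ occur, the set of positions $i$ with $w[i+1..|w|] \in Z^*$ must meet every pair of consecutive integers in a window of length $d+2$; but the lengths of words in $Z^*$ lie in the numerical semigroup generated by $d$ and $d+1$, whose gaps below roughly $d^2-2d$ all have size at least $2$. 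Hence $|w| = \Omega(d^2) = \Omega(\log^2 k)$ for \emph{every} such code: the $\Theta(\log^2 k)$ reset word of the Wielandt code is not an unlucky choice you can engineer away, it is forced by the two-consecutive-lengths shape you prescribe. Your example $\{00,01,10,110,111\}$ is misleading precisely because at $d=2$ the bounds $O(d)$ and $\Omega(d^2)$ coincide. (Balanced codes with short reset words do exist if the length ratio $M/m$ is allowed to approach $2$, so that the semigroup gaps close, but constructing one with exactly $k$ codewords and verifying an $O(\log k)$ reset word is substantial work your proposal does not contain.)

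The paper's proof shows this detour is unnecessary: it keeps the Wielandt code $Z = 0\{0,1\}^{\log k - 1} \cup 1\{0,1\}^{\log k}$ with $c_Z = \Theta(\log^2 k)$ and instead neutralizes the additive term on the other side of the inequality. The {\sc Set Cover} hardness underlying Theorem~\ref{sa-set-cover} and Corollary~\ref{codes-weak}(iii) still holds under the promise that the optimum is at least $d\log|X|$ for a constant $d>0$ (instances with smaller optimum can be solved exactly), so one may assume $\ell \ge \log k$ for the instances produced by the reduction; then Theorem~\ref{thm-composition} gives a shortest synchronizing word between $\ell\log k$ and $2(\log k+1)\ell + \Theta(\log^2 k) = \Theta(\ell\log k)$, i.e.\ a gap-preserving reduction. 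This ``the optimum is large without loss of generality'' observation is the idea missing from your proposal; with it, your remaining steps go through essentially as in the paper, and without it your reduction stands on a code $Z$ that cannot exist in the form you describe.
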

\begin{proof}
	We start with claim (iii) in Corollary \ref{codes-weak} and use Theorem \ref{thm-composition} to reduce the size of the alphabet. Thus, we reduce {\sc Short Sync Word} for Huffman decoders over an alphabet of size $n^{1 + \gamma}$ to {\sc Short Sync Word} for binary Huffman decoders.
	
	Assume that the size of the alphabet $k = n^{1 + \gamma}$ is a power of two (if no, duplicate some letter the required number of times). We take the code $0\{0, 1\}^{\log k - 1} \cup 1\{0, 1\}^{\log k}$ as $Z$. This is a code where some words are of length $\log k$ and the other words are of length $\log k + 1$ (after minimization the star of this code is recognized by a Wielandt automaton with $\log k + 1$ states discussed in the introduction). This code has a synchronizing word of length $\Theta((\log k)^2)$ \cite{Ananichev2010}. The number of vertices in the tree of this code is $\Theta(n)$.
	
	Let $\ell$ be the length of a shortest synchronizing word for the original automaton. By Theorem \ref{thm-composition}, the length of a shortest synchronizing word for the result of the composition is between $\ell \log k$ and $2(\log k + 1) \ell + \Theta((\log k)^2) = \Theta((\ell + \log k) \log k)$.
	
	For the {\sc Set Cover} problem the inapproximability result holds even if we assume that the size of the optimal solution is of size at least $d \log |X|$ for some $d > 0$. Indeed, if $d$ is a constant we can check all the subsets of $C$ of size at most $d \log |X|$ in polynomial time. Thus, we can assume that $\ell \ge \log k$ implying $(\ell + \log k) \log k  = \Theta(\ell \log k)$. Hence after the composition the length of a shortest synchronizing word is changed by at most constant multiplicative factor, and we we get a gap-reserving reduction with gap $\Theta(\log n)$. The resulting automaton is of size $\Theta(n^{2 + \gamma})$, and the $(2 + \gamma)$ dependence is hidden in the constant $c$ in the statement of this corollary.
\end{proof}

\section{Partial Huffman Decoders} \label{sect-partial}

In this section we investigate automata recognizing the star of a non-maximal finite prefix code. Such codes have some noticeable properties which do not hold for maximal finite prefix codes. For example, there exist non-trivial non-maximal finite prefix codes with finite synchronization delay, which provides guarantees on the synchronization time \cite{Bruyere1998}. This allows to read a stream of correctly transmitted compressed data from arbitrary position, which can be useful for audio and video decompression.

First we show that the known upper bounds and approximability for {\sc Short Sync Word} hold true for strongly connected partial automata. Because of Proposition 3.6.5 of \cite{Berstel2010}, Algorithm 1 of \cite{Volkov2008} works without any changes for strongly connected partial automata. The analysis of its approximation ratio is the same as in \cite{Gerbush2011}. Thus we get the following.

\begin{theorem}
	There exists a polynomial time algorithm (Algorithm 1 of \cite{Volkov2008}) finding a synchronizing word of length at most $\frac{n^3 - n}{6}$ for a $n$-state strongly connected partial automaton. Moreover, this algorithm provides a $O(n)$-approximation for the {\sc Short Sync Word} problem.
\end{theorem}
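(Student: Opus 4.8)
The plan is to verify that the classical greedy compression procedure (Algorithm~1 of \cite{Volkov2008}), together with its analysis, carries over verbatim once the two facts it relies on are re-established in the partial setting. Recall the algorithm: it maintains an \emph{active set} $S$, initialised to $Q$; while $|S| \ge 2$ it appends to the output a shortest word $w$ with $|S \cdot w| < |S|$ and replaces $S$ by $S \cdot w$. Here $S \cdot w$ denotes the set of \emph{defined} images, so for a partial automaton an undefined transition simply drops a state from $S$. Correctness and the length bound rest on exactly two ingredients: (i) whenever $|S| \ge 2$ a compressing word exists, so the loop terminates at a singleton; and (ii) a bound on the length of a shortest compressing word, obtained from reachability in the pair (square) automaton. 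First I would isolate these as the only places where completeness of the automaton enters the usual proof.

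Next I would re-establish (i) and (ii) for strongly connected synchronizing partial automata via Proposition~3.6.5 of \cite{Berstel2010}. Form the pair automaton whose states are the unordered pairs of states of $A$: a letter $a$ sends $\{p,q\}$ to $\{\delta(p,a),\delta(q,a)\}$ when both images are defined, to the singleton given by the one defined image when exactly one is defined (a compression), and is undefined when neither is. Proposition~3.6.5 characterizes synchronization of such automata by reachability of a singleton from every pair in this pair automaton; in particular, for any active set $S$ with $|S| \ge 2$ some pair $\{p,q\} \subseteq S$ reaches a singleton, and applying the corresponding word collapses that pair to a single \emph{defined} state, so $|S \cdot w| \le |S|-1$ and $S \cdot w \ne \emptyset$. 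The latter point is the one partiality-specific subtlety worth checking: because the targeted pair maps to a defined singleton, the produced word synchronizes rather than annihilates. The key observation is that partiality can only help, since an undefined transition merges a pair into a singleton immediately; hence the reachability structure, and with it the length bounds, are never worse than in the complete case. Consequently the refined per-step bound of the standard analysis gives a shortest compressing word of length at most $\binom{n-|S|+2}{2}$ when $|S|$ is the current size, and summing over the at most $n-1$ iterations yields total length at most $\sum_{k=2}^{n}\binom{n-k+2}{2} = \frac{n^3-n}{6}$, exactly as in the complete case. Since the pair automaton has $O(n^2)$ states, each shortest compressing word is found by breadth-first search in polynomial time, so the whole algorithm runs in polynomial time.

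For the approximation guarantee I would reproduce the argument of \cite{Gerbush2011}, which is insensitive to partiality. Let $\ell$ be the length of a shortest synchronizing word $w^\ast$ for $A$. Applied to any active set $S$ with $|S| \ge 2$, the word $w^\ast$ maps every state whose image stays defined to one common state, so $|S \cdot w^\ast| \le 1 < |S|$; thus a compressing word of length at most $\ell$ always exists, and the shortest one selected by the algorithm has length at most $\ell$. As $|S|$ strictly decreases at each of the at most $n-1$ iterations, the produced word has length at most $(n-1)\ell$, an $O(n)$-approximation. The main obstacle, and the step I would check most carefully, is the transfer of ingredient (ii): one must confirm that Proposition~3.6.5 delivers the pair-automaton reachability in the precise form needed and that introducing undefined transitions does not invalidate the refined per-step bound underlying the constant $\frac{1}{6}$. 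Once undefinedness is seen only to accelerate compression this is routine, but it is the sole nontrivial point in adapting the complete-automaton proof.
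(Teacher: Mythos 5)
Your proposal follows exactly the route the paper takes: the paper's entire proof consists of observing that, thanks to Proposition~3.6.5 of \cite{Berstel2010}, Algorithm~1 of \cite{Volkov2008} works for strongly connected partial automata, and that the approximation analysis of \cite{Gerbush2011} carries over. Your reconstruction of the cubic bound (pair collapsing via the pair automaton, a Frankl-type per-step bound, summation to $\frac{n^3-n}{6}$) is the intended argument, and you correctly identify the one partiality-specific danger: a word can annihilate the active set rather than synchronize it. However, you do not apply that observation consistently, which leaves two concrete problems. First, as you literally state the algorithm (``a shortest word $w$ with $|S\cdot w| < |S|$''), it may select an annihilating word: a letter undefined on every state of $S$ satisfies $|S \cdot w| = 0 < |S|$, after which the loop stops with $S = \emptyset$ and the output is a mortal word, not a synchronizing one. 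The step must be required to satisfy $1 \le |S \cdot w| < |S|$; with that requirement your existence argument via pair collapsing is fine.

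Second, and this is the genuine gap, the same phenomenon breaks your $O(n)$-approximation argument. You claim that for any active set $S$ the shortest synchronizing word $w^*$ witnesses a compressing word of length at most $\ell$ because $|S \cdot w^*| \le 1$. But $|S \cdot w^*|$ can be $0$: nothing prevents $S$ from being disjoint from the set of states on which $w^*$ is defined. For example, take states $\{1,2,3,4\}$ with $a\colon 1\leftrightarrow 2,\ 3\leftrightarrow 4$; $b\colon 1,2 \mapsto 1$, undefined on $3,4$; $c\colon 1\mapsto 3,\ 2 \mapsto 4,\ 3 \mapsto 1,\ 4\mapsto 2$; $d\colon 1,3\mapsto 3,\ 2,4 \mapsto 4$. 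This automaton is strongly connected, $w^*=b$ and $\ell = 1$, yet $S=\{3,4\} = Q \cdot d$ is a legitimate active set after a first greedy step (since $d$ compresses $Q$), it satisfies $S \cdot b = \emptyset$, and its shortest valid compressing word, $cb$, has length $2 > \ell$. So under the corrected algorithm $w^*$ is not a legal step from $S$, and the per-step bound $\ell$ does not follow. The natural repair --- use strong connectivity to route one state of $S$ into the domain of $w^*$ and then apply $w^*$ --- gives a per-step bound of $\ell + n - 1$, hence a total of $(n-1)(\ell + n -1)$, which is an $O(n)$-approximation only when $\ell = \Omega(n)$; for short $w^*$ this argument yields only an $O(n + n^2/\ell)$ ratio. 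Controlling the sets that the greedy algorithm actually visits (or arguing differently) is precisely the nontrivial content hidden behind the paper's assertion that the analysis ``is the same as in \cite{Gerbush2011}''. Relatedly, ``partiality can only help'' is not by itself a proof of the per-step Frankl bound --- undefined transitions also destroy paths in the pair automaton --- although here Pin's minimality argument does go through if a transition undefined on one state of a pair, with the other state surviving, is treated as the merging event.
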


Now we provide a lower bound on the approximability of the {\sc Short Sync Word} problem for partial Huffman decoders by extending the idea used to prove inapproximability for Huffman decoders in the previous sections. First we prove the result for alphabet of size $5$ and then use a composition with a maximal finite prefix code to get the same result for the binary case.

\begin{theorem} \label{thm-partial-inapprox}
	The {\sc Short Sync Word} problem cannot be approximated within a factor of $n^{\frac{1}{2} - \varepsilon}$ for every $\varepsilon > 0$ for $n$-state partial Huffman decoders over an alphabet of size $5$ unless $P = NP$.
\end{theorem}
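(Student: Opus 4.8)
The plan is to start from the automaton $A_\phi$ of Gawrychowski and Straszak recalled in Section~\ref{sect-sc} and to turn it into a partial Huffman decoder by \emph{unrolling it in time}. Recall that $A_\phi$ is a complete ternary automaton on $\Theta(M^{1+3\varepsilon})$ states in which the three letters are the two gadget-navigation letters and a reset letter, the only state where computations from different gadgets can merge is the sink $s$, and a satisfiable instance admits a synchronizing word of length $O(N)$ while a non-satisfiable one needs length $\Omega(NM^{1-\varepsilon})$. Fix $D=\Theta(NM^{1-\varepsilon})$ and replace each non-sink state $q$ by time-stamped copies $(q,t)$ for $0\le t\le D$, putting $\delta((q,t),x)=(\delta(q,x),t+1)$ for $t<D$ when $\delta(q,x)$ is not the sink, $\delta((q,t),x)=s$ when $\delta(q,x)=s$, and leaving every letter \emph{undefined} at level $D$; the sink keeps its self-loops. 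The result is a strongly acyclic \emph{partial} automaton whose level-$D$ dead ends are exactly the device that will let us kill gadgets that are never satisfied.

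I would then make this a partial Huffman decoder in the manner of Lemma~\ref{sa-to-huff}: take the sink $s$ as the root, add a binary tree on two fresh letters $b_1,b_2$ joining the root to the level-$0$ source copies, and let $b_1,b_2$ act on the old states as a fixed navigation letter. A computation that reaches a satisfying leaf returns to the root and thereby completes a codeword, whereas a computation that reaches level $D$ simply dies, so the first-return words form a \emph{finite non-maximal} prefix code and we indeed obtain a partial Huffman decoder. Together with the two navigation letters and the reset letter inherited from $A_\phi$ the alphabet has size $5$, and the number of states is $\Theta(M^{1+3\varepsilon}\cdot D)=\Theta(M^{2+3\varepsilon})$ because $N\le M^{\varepsilon}$.

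For the gap, in the satisfiable case the word $\mathrm{reset}\cdot\alpha^{*}$ of length $N+1$, where $\alpha^{*}$ is a satisfying assignment, drives every copy either to the root through a satisfying leaf or to a level-$D$ dead end, so all survivors coincide at the root and the shortest synchronizing word has length $O(N)$. In the non-satisfiable case a single pass satisfying at least one constraint, followed by padding up to length $D$, sends the corresponding gadget to the root and every other copy to a level-$D$ dead end, giving a synchronizing word of length $O(D)$. Granting the lower bound below, the shortest synchronizing word in the non-satisfiable case has length $\Theta(D)$, so the gap is $\Theta(M^{1-\varepsilon})$; since the decoder has $n=\Theta(M^{2+3\varepsilon})$ states this equals $n^{1/2-\varepsilon'}$ with $\varepsilon'\to 0$ as $\varepsilon\to 0$, which yields the claimed inapproximability.

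The heart of the proof, and the step I expect to be the main obstacle, is the matching lower bound in the non-satisfiable case: no word shorter than $D$ can synchronize, in spite of the extra freedom offered by the undefined (killing) transitions. The structural facts I would rely on are that distinct gadgets share no state except the sink, and that sending a gadget root to the sink requires spelling out, along a reset-free navigation, a satisfying assignment for that gadget at cost $\Theta(N)$ letters; consequently any word $w$ performs only $O(|w|/N)$ such navigations, and as each assignment satisfies at most $M^{\varepsilon}$ of the $M$ constraints, $|w|<D=\Theta(NM^{1-\varepsilon})$ forces strictly fewer than $M$ constraints to be satisfied. Hence some gadget $C$ is never sent to the sink, and since $|w|<D$ its level-$0$ copy $(\rho_C,0)$ has not reached a dead end, so it survives at a non-sink state while at least one other copy survives at a different state, and $w$ cannot be synchronizing. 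Converting this into a rigorous invariant that accounts for \emph{every} initially active copy---internal gadget nodes at all levels as well as the tree states---and that controls the interplay between the reset letter and the time stamps is the delicate part of the argument.
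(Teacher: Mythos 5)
Your proposal is correct, and it reaches the $n^{1/2-\varepsilon}$ gap by a genuinely different mechanism than the paper. The paper's proof modifies the Gawrychowski--Straszak gadgets in space: it deletes the satisfying leaves (so reaching a satisfying assignment kills a computation), chains $M$ copies $T_j^1,\dots,T_j^M$ of each constraint tree in series so that unsatisfied survivors must traverse all $M$ copies before reaching $s$, keeps a counter chain $c_0,\dots,c_N$ whose orbit is the unique survivor in the satisfiable case, and then must attach chains of length $MN$ to the $b_1,b_2$-tree, precisely because its dispatch letter $a$ is undefined on most states: without those chains one could apply $a$, walk the orbit of $s$ down the tree to some $r_j$, and apply $a$ again, thereby killing everything except one gadget root and synchronizing any instance in $O(\log M)$ letters. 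Your time-unrolling instead keeps the original GS semantics (satisfying leaf to sink, non-satisfying leaf back to the root) and concentrates all undefined transitions at the horizon $D$; consequently no analogue of that shortcut exists, since for $|w|\le D$ every level-$0$ copy and the sink have all transitions defined, so partiality cannot be exploited to kill witnesses early and any short word is forced to act like a word of the complete GS automaton. Moreover, because levels strictly increase along every transition that avoids $s$, two copies released at different levels can only ever merge at $s$; together with the fact that $s$ can never die, this makes the invariant you flag as the delicate part essentially immediate: for $|w|\le D$ the orbit of $s$ and all level-$0$ root copies survive, a root copy joins the orbit of $s$ only via a reset-free navigation spelling a satisfying assignment, and your count of at most $\lceil |w|/N\rceil\cdot M^{\varepsilon}$ sunk constraints finishes the lower bound. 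So your route is, if anything, easier to make fully rigorous than the paper's, which has to argue around the $a$-shortcut. Two details to tighten: the gap should be stated as $\Omega(M^{1-2\varepsilon})$ rather than $\Theta(M^{1-\varepsilon})$, since the satisfiable optimum is $\Theta(N)=O(M^{\varepsilon})$, not $O(1)$ (this still yields $n^{1/2-\varepsilon'}$); and your decoder, like the paper's own construction, is not trim (the level-$D$ dead ends are not coaccessible), which is acceptable under the paper's literal definition of a partial Huffman decoder but is worth an explicit remark.
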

\begin{proof}
	First we prove inapproximability for the class of partial strongly acyclic automata, that is, automata having no simple cycles but loops in the sink state. We start with the CSP problem described in Section \ref{sect-sc} with all the restriction defined there. Having an instance of this problem with $N$ variables and $M$ constraints such that each constraint is satisfied by at most $K$ assignments, we construct an automaton $A^b_{\phi}$ over the alphabet $\{0, 1\}$. For each constraint $j$, we construct $M$ identical compressed trees $T_j^{1}, \ldots,  T_j^{M}$ corresponding to this constraint (also described in Section \ref{sect-sc}). Then for $1 \le i \le M - 1$ we merge the leaves of $T_j^{i}$ corresponding to non-satisfying assignments with the root of $T_j^{i + 1}$, and delete all the leaves corresponding to satisfying assignments (leaving all the transition leading to deleted states undefined). For each $T_j^{M}$, we again delete all the leaves corresponding to satisfying assignments and merge all the leaves corresponding to non-satisfying assignments with a new state $s$. This state is a self-loop, that is, $0, 1$ map $s$ to itself. Now we define an additional letter $a$ and $M$ new states $r_1, \ldots, r_M$. We define $a$ to map $r_j$ to the root of $T_j^1$. Finally, we add $N + 1$ new states $c_0, \ldots, c_N$ such that $a$ maps $c_0$ to $c_1$, and $0, 1$ map $c_i$ to $c_{i + 1}$ for $1 \le i \le N - 1$, and map $c_N$ to $s$. All other transitions are left undefined.
	
	If $a$ is applied first, the set $S$ of states to be synchronized is $c_1$ together with the roots of $T_j^{1}$ for all $j$. Observe that $a$ cannot be applied anymore, since it would result in mapping all the active states of the automaton to void. If a letter other than $a$ is applied first, a superset of $S$ must be synchronized then.
	
	If there exists a satisfying assignment $x_1, \ldots, x_N$ then the word $a x_1 \ldots x_N$ is synchronizing, since it maps all the states but $c_0$ to void. Otherwise, to synchronize the automaton we need to pass through $M$ compressed trees, since each tree can map only at most $M^\varepsilon$ states to void (since for every non-satisfiable CSP the maximum number of satisfiable constraints is $M^\varepsilon$ in the construction, see Section \ref{sect-sc}). Thus we get a gap of $M^{1 - \varepsilon} = n^{\frac{1}{2} - \varepsilon}$ for the class of $n$-state strongly acyclic partial automata.
	
	Now we are going to transfer this result to the case of partial Huffman decoders. We extend the idea of Lemma \ref{sa-to-huff}. All we need is to define transitions leading from $s$ to the states having no incoming transitions (which are $r_1, \ldots, r_M$ together with $c_0$). The only difference is that now we have to make sure that $a$ cannot be applied too early resulting in mapping all the states of the compressed trees to void leaving the state $s$ active.
	
	To do that, we introduce two new letters $b_1, b_2$ and perform branching as described in Lemma \ref{sa-to-huff}. Thus we get $M + 1$ leaves of the constructed full binary tree. To each leave we attach a chain of states of length $MN$ ending in the root of $T_j^{1}$ (or in $c_0$). This means that we introduce $MN$ new states and define the letters $b_1, b_2$ to map a state in each chain to the next state in the same chain. This guarantees that if the letter $a$ appears twice in a word of length at most $MN$, this word maps all the states of the automaton to void. Finally, the action of $b_1, b_2$ on the compressed trees and the states $c_0, \ldots, c_N$ repeats, for example, the action of the letter $0$.
	
	The number of states of the automaton in the construction is $O(M^{2 + 3 \varepsilon})$ The gap is then $M^{1 - 2 \varepsilon}$. By choosing small enough $\varepsilon$ we thus get a gap of $n^{\frac{1}{2} - \varepsilon}$ as required. 
\end{proof}

The next lemma shows that under some restrictions it is possible to reduce the alphabet of a non-maximal prefix code in a way that approximate length of a shortest synchronizing word is preserved. A word is called {\em non-mortal} for a prefix code $X$ if it is a factor of some word in $X^*$.

\begin{lemma} \label{prop-composition-partial}
	Let $Y, Z$ be synchronizing prefix codes such that $Z$ is finite and maximal. Let $m$ and $M$ be the lengths of a shortest and a longest codeword in $Z$. Suppose that $Y \circ_\beta Z$ is defined for some $\beta$. If there exists a synchronizing word $w_Z$ for $Z$ such that $\beta^{-1}(w)$ is a non-mortal word for $Y$, then the composition $X = Y \circ_\beta Z$ is synchronizing. Moreover, then the length of a shortest synchronizing word for $X$ is between $m\ell$ and $M \ell + |w_Z|$, where $\ell$ is the length of a shortest synchronizing word for $Y$.
\end{lemma}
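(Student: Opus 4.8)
The plan is to mimic the proof of Theorem~\ref{thm-composition}, but to replace the absorbing-pair bookkeeping by a direct analysis of the partial decoder of $X$. I would first fix the decoder $A_X$ obtained from a strongly connected decoder $H_Y$ of $Y$ by expanding the outgoing transitions of each of its states into a copy of the tree $T_Z$, exactly as in the construction preceding Corollary~\ref{binary-huffman}. Call a state of $A_X$ \emph{main} if it is one of the original states of $H_Y$ (so it sits at a $Z$-codeword boundary), and \emph{inner} otherwise (it records a nonempty proper prefix of a $Z$-codeword together with the underlying $H_Y$-state). The point of this classification is that reading a word $\beta(u)$ with $u\in\Sigma_Y^*$ from a main state $q$ simply simulates reading $u$ from $q$ in $H_Y$, landing at the main state $q\cdot u$ (or at the void, if undefined).

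For the upper bound I would take a shortest synchronizing word $u_Y$ for $Y$, so $|u_Y|=\ell$, and show that $w_Z\,\beta(u_Y)$ synchronizes $A_X$. Since $\beta^{-1}(w_Z)$ is defined we have $w_Z\in Z^*$, and because $w_Z$ synchronizes the maximal code $Z$ it collapses the embedded $Z$-decoder to its root; hence reading $w_Z$ from an arbitrary state of $A_X$ either dies or lands at a main state. After this reset, $\beta(u_Y)$ simulates $u_Y$ on the main states, so every surviving state is sent to the single state $s_Y$ to which $u_Y$ synchronizes $H_Y$. As $\beta(u_Y)$ has length at most $M\ell$ and $\beta^{-1}(w_Z)$ is non-mortal, the word $w_Z\,\beta(u_Y)$ has length at most $M\ell+|w_Z|$, which both witnesses that $X$ is synchronizing and gives the claimed upper bound.

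For the lower bound let $w_X$ be a shortest synchronizing word for $A_X$, synchronizing it to a state $s_X$. Reading $w_X$ must in particular bring the embedded $Z$-decoder to the single $Z$-position recorded by $s_X$, so $w_X$ synchronizes $Z$ and factors as $w_X=\hat w\,\pi$ with $\hat w\in Z^*$ and $\pi$ a (possibly empty) proper prefix of a $Z$-codeword. Running $w_X$ from the main states, the prefix $\hat w$ simulates $\beta^{-1}(\hat w)$ on $H_Y$, and the suffix $\pi$ then pushes every survivor into the inner states of a \emph{single} $T_Z$-copy; this forces $\beta^{-1}(\hat w)$ to send all states of $H_Y$ to one state, i.e.\ to be synchronizing for $Y$. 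Hence $\hat w$ contains at least $\ell$ codewords of $Z$, each of length at least $m$, giving $|w_X|\ge|\hat w|\ge m\ell$.

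The main obstacle is the seemingly innocuous requirement that $w_Z\,\beta(u_Y)$ genuinely synchronizes rather than annihilates $A_X$: one must guarantee that at least one state survives the whole word. Writing $v=\beta^{-1}(w_Z)$, this is equivalent to showing that $v\,u_Y$ is non-mortal for $Y$, i.e.\ that the set of states reachable after the reset $v$ meets $\mathrm{dom}(u_Y)$, the set of states from which $u_Y$ is defined (all of which $u_Y$ sends to $s_Y$). Note that a concatenation of two non-mortal words need not be non-mortal, so this does not follow formally from non-mortality of $v$ and of $u_Y$ alone. This is exactly where the hypothesis enters: non-mortality of $\beta^{-1}(w_Z)$ guarantees that $v$ is defined on some state and lands in the coaccessible part of the strongly connected decoder $H_Y$ (strong connectivity being available because $Y$ is recognizable), and combining this with the synchronizing property of $u_Y$ one argues that a state reached after $v$ can be taken inside $\mathrm{dom}(u_Y)$. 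I expect the careful verification of this non-mortality---rather than the two length estimates, which are routine once the main/inner classification is in place---to be the delicate point of the proof.
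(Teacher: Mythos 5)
Your overall route is the same as the paper's: the paper's own proof is exactly your two bounds in compressed form (upper bound via the word $w_Z\beta(w_Y)$ for a synchronizing word $w_Y$ of $Y$, lower bound by pulling a synchronizing word of $X$ back through $\beta^{-1}$). The problem is that the step you explicitly leave open --- showing that $w_Z\beta(u_Y)$ is defined on at least one state of the decoder of $X$ --- is never closed, and under the paper's definitions a mortal word is \emph{not} synchronizing, so without it the upper bound has no proof at all. Worse, the difficulty is not just one of verification: the statement your sketch relies on, namely that the reset image $\delta_X(Q_X,w_Z)$ meets $\mathrm{dom}(u_Y)$ for a \emph{fixed} shortest synchronizing word $u_Y$ of $Y$, is simply false. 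Take $Y=\{ab\}$, whose trim decoder has states $r,p$ with $\delta(r,a)=p$, $\delta(p,b)=r$; take $Z=\{0,1\}$ with $\beta(a)=0$, $\beta(b)=1$, and $w_Z=0$. Then $v=\beta^{-1}(w_Z)=a$ is non-mortal and $u_Y=a$ is a shortest synchronizing word for $Y$, yet $vu_Y=aa$ is mortal, i.e.\ $w_Z\beta(u_Y)=00$ kills both states of the decoder of $X=\{01\}$. Here the other shortest synchronizing word $u_Y=b$ does work, so what must actually be proved is that \emph{some} shortest synchronizing word of $Y$ (or some other word within the length budget) is defined on a state of $\delta_X(Q_X,w_Z)$. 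Your proposed fix --- use strong connectivity to carry a state reached after $v$ into $\mathrm{dom}(u_Y)$ --- does not achieve this: it requires inserting a transfer word $t$, producing a synchronizing word of length up to $M(\ell+|t|)+|w_Z|$, which overshoots the claimed bound $M\ell+|w_Z|$.

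Two further remarks. First, you have in fact located a weakness of the published argument rather than overlooked something it contains: the paper asserts in one sentence that $w_Z\beta(w_Y)$ is synchronizing for $X$ for an arbitrary synchronizing word $w_Y$ of $Y$, and the example above shows that this assertion needs a careful choice of $w_Y$; so closing this gap is genuinely necessary, not pedantry. Second, your lower bound is more careful than the paper's (which applies $\beta^{-1}$ to $w_X$ as though $w_X\in Z^*$), but it has a smaller gap of the same flavour: the factorization $w_X=\hat w\pi$ and the simulation argument are anchored at a \emph{main} state that survives $w_X$, and you do not rule out that only inner states survive; in that case the factorization must be taken with respect to a surviving inner state $(q,p)$, so that $pw_X\in Z^*\pi$, and the length accounting has to be redone. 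Both halves therefore need the same missing ingredient: an argument that a survivor of the right type exists.
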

\begin{proof}
	Let $Y \subseteq \Sigma_Y^*$, $X, Z \subseteq \Sigma_Z^*$, and $\beta: \Sigma_Y \to Z$ be such that $X = Y \circ_\beta Z$. Let $w_Y$ be a synchronizing word for $Y$. Then $w_Z \beta(w_Y)$ is a synchronizing word for $X$ of length at most $M \ell + |w_Z|$. In the other direction, let $w_X$ be a synchronizing word for $X$. Then $\beta^{-1}(w_X)$ is a synchronizing word for $Y$. Thus, $|w_X| \ge m\ell$.
\end{proof}

\begin{corollary}
	The {\sc Short Sync Word} problem cannot be approximated in polynomial time within a factor of $n^{\frac{1}{2} - \varepsilon}$ for every $\varepsilon > 0$ for binary $n$-state partial Huffman decoders unless $P = NP$.
\end{corollary}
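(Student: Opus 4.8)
The plan is to mirror the alphabet-reduction argument of Corollary~\ref{binary-huffman}, but in the partial (non-maximal) setting, replacing Theorem~\ref{thm-composition} by Lemma~\ref{prop-composition-partial}. Concretely, I would start from the partial Huffman decoder produced by Theorem~\ref{thm-partial-inapprox}, which realizes an $n^{\frac{1}{2}-\varepsilon}$ gap over the five-letter alphabet $\Sigma_Y=\{0,1,a,b_1,b_2\}$; call the recognized finite non-maximal prefix code $Y$. Since here the alphabet has \emph{constant} size, the composition only costs a constant factor, so it suffices to exhibit one fixed finite maximal synchronizing binary prefix code $Z$ with exactly five codewords together with a bijection $\beta\colon\Sigma_Y\to Z$ for which the hypotheses of Lemma~\ref{prop-composition-partial} hold, and then to take $X=Y\circ_\beta Z$.

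For $Z$ I would take the concrete code $Z=\{00,01,10,110,111\}$, a maximal binary prefix code on five codewords with shortest and longest codeword lengths $m=2$, $M=3$. A short computation on its four-state decoder shows that the word $w_Z=0110=(01)(10)\in Z^*$ maps every decoding state to the root, so $Z$ is synchronizing with synchronizing word $w_Z$. The binary partial Huffman decoder for $X$ is then built by the tree-substitution construction described after Theorem~\ref{thm-composition}, adapted to the partial case: at each state $q$ of the size-five decoder we glue in a copy of the tree of $Z$, merging each leaf with the image of $q$ under the corresponding letter of $\Sigma_Y$ and leaving that transition undefined whenever the original transition was undefined. As the tree of $Z$ has constant size, this yields a binary partial Huffman decoder for the finite non-maximal code $X$ with $\Theta(n)$ states.

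The main obstacle, and the only place that genuinely differs from the maximal case, is verifying the non-mortality hypothesis of Lemma~\ref{prop-composition-partial}: one must guarantee a synchronizing word $w_Z$ for $Z$ whose preimage $\beta^{-1}(w_Z)$ is non-mortal for $Y$. Recall from the construction in Theorem~\ref{thm-partial-inapprox} that mortality of a short word is caused exactly by two occurrences of the letter $a$ within distance $MN$. Since $w_Z=0110$ factors into the two codewords $01$ and $10$, I would simply choose $\beta$ so that $a$ is \emph{not} mapped to either of these, i.e.\ $\beta(a)\in\{00,110,111\}$. Then $\beta^{-1}(w_Z)$ is a length-two word over $\{0,1,b_1,b_2\}$; being $a$-free and of bounded length, it is defined on the root of any compressed tree $T_j^1$ and hence non-mortal for $Y$, which makes Lemma~\ref{prop-composition-partial} applicable.

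Finally I would combine the estimates. By Lemma~\ref{prop-composition-partial} the length of a shortest synchronizing word for $X$ lies between $m\ell=2\ell$ and $M\ell+|w_Z|=3\ell+4$, where $\ell$ is the shortest synchronizing word length of $Y$. Since $m$, $M$ and $|w_Z|$ are absolute constants, this changes both the $O(M^\varepsilon)$ ``yes''-length and the $\Omega(M^{1-\varepsilon})$ ``no''-length only by constant factors, so the $n^{\frac{1}{2}-\varepsilon}$ gap of Theorem~\ref{thm-partial-inapprox} survives after absorbing the constants into a slightly larger $\varepsilon$. As the number of states grew only by a constant factor, the reduction is polynomial and the gap is $n^{\frac{1}{2}-\varepsilon}$ in the size of the binary partial Huffman decoder, which proves the claim.
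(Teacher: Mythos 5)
Your proof is correct and is essentially the paper's own argument: the paper likewise applies Lemma \ref{prop-composition-partial} to the decoder of Theorem \ref{thm-partial-inapprox}, composing with a five-word maximal binary prefix code (there $\{aaa,aab,ab,ba,bb\}$, whose synchronizing word $baab=(ba)(ab)$ has the non-mortal preimage $a0$). Your only deviations are cosmetic (the concrete code $Z=\{00,01,10,110,111\}$ with $w_Z=0110$, and choosing $\beta$ so that the preimage avoids $a$); just note that definedness at the root of $T_j^1$ alone does not yield non-mortality---the image state must also be co-accessible to $s$---which is immediate if you instead read your two-letter word from $c_1$, whose image $c_3$ returns to $s$ along the chain.
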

\begin{proof}
	We use the composition of the automaton constructed in the proof of Theorem \ref{thm-partial-inapprox} with the prefix code $\{aaa, aab, ab, ba, bb\}$ having a synchronizing word $baab$. The word $baab$ is a concatenation of two different codewords, so their pre-images can be taken to be $a$ and $0$, resulting in a non-mortal word $a0$ for $A$, so we can use Lemma \ref{prop-composition-partial}.
\end{proof}

\section{Literal Huffman Decoders} \label{sect-literal}

In this section we deal with literal Huffman decoders. Given a finite maximal prefix code $X$ over an alphabet $\Sigma$, the literal automaton recognizing $X^*$ is an automaton $A = (Q, \Sigma, \delta)$ defined as follows.
The states of $A$ correspond to all proper prefixes of the words in $X$, and the transition function is defined as 

$$\delta(q, x) = \left\{
\begin{array}{ll}
qx &\mbox{if } qx \not \in X,\\
\epsilon &\mbox{if } qx \in X
\end{array}
\right.$$

We will need the following useful lemma. The {\em rank} of a word $w \in \Sigma^*$ with respect to an automaton $A = (Q, \Sigma, \delta)$ is the size of the image of $Q$ under the mapping defined by $w$.

\begin{lemma}[{\cite[Lemma~16]{Berlinkov2016}}] \label{lemma-logrank}
	For every $n$-state literal Huffman decoder over an alphabet of size $k$ there exists a word of length $\lceil \log_k n\rceil$ and rank at most $\lceil\log_k n\rceil$.
\end{lemma}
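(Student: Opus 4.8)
The plan is to exploit the trie structure of the literal decoder and to isolate a single well-chosen word of length $\ell = \lceil \log_k n\rceil$ whose image lies entirely among its own suffixes. Recall that the states are the proper prefixes of the codewords, that reading a letter $x$ from a state $q$ either descends to the child $qx$ or, when $qx$ is a codeword, resets to the root $\epsilon$, and that the ancestors of a state are again states (a proper prefix of a prefix of a codeword is a prefix of that codeword, and cannot itself be a codeword without violating the prefix property). I would first analyze where a word $w = w_1 \cdots w_\ell$ sends a state $q$: tracking the position $t$ of the \emph{last} reset encountered while reading $w$ from $q$, the image equals the suffix $w_{t+1}\cdots w_\ell$ when $t \ge 1$, and equals $qw$ when no reset occurs (that is, when $q, qw_1, \ldots, qw$ are all states). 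Hence $\delta(Q,w)$ is contained in the at most $\ell$ proper suffixes of $w$ together with these ``no-reset'' images $qw$.

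The key idea is to choose $w$ so that the no-reset set is empty. Call a word $w \in \Sigma^\ell$ \emph{bad} if it labels a path of length $\ell$ all of whose $\ell+1$ vertices are states, and \emph{good} otherwise. If $w$ is good then, by the analysis above, reading $w$ from every state (including $\epsilon$ itself) forces at least one reset, so $\delta(Q,w)$ is contained in the set of proper suffixes of $w$, a set of size at most $\ell$. Thus a good word has rank at most $\ell = \lceil \log_k n\rceil$, exactly as required, and it only remains to produce a good word of that length.

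I would establish existence of a good word by counting. Each all-internal path of length $\ell$ is determined by its final vertex $v$: since edges of the trie go from parent to child, the path must be the chain of the $\ell$ nearest ancestors of $v$ up to $v$, and these are automatically states. This yields a bijection between all-internal paths of length $\ell$ and states at depth at least $\ell$, so their number is at most $n-1$, the root at depth $0 < \ell$ being excluded. Consequently there are at most $n-1$ distinct bad words, while there are $k^{\lceil \log_k n\rceil} \ge n$ words of length $\ell$ in total, so by pigeonhole a good word exists. The only delicate point -- and the one I expect to require the most care -- is the bookkeeping around resets: making precise that being good exactly forbids the image from escaping the suffix set, and setting up the path-to-vertex bijection cleanly. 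Once that is in place the counting is immediate, and the choice of $\ell = \lceil\log_k n\rceil$ is precisely what makes the pigeonhole inequality $k^\ell \ge n > n-1$ go through.
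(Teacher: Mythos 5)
Your proof is correct. Note that the paper itself gives no proof of this statement: it is imported by citation from \cite{Berlinkov2016} (Lemma~16), so there is no in-paper argument to diverge from. Your reasoning is sound and is essentially the natural counting argument behind the cited result: the reset analysis correctly shows that the image of any state under $w$ is either a proper suffix of $w$ or a ``no-reset'' image $qw$; the all-internal paths of length $\ell$ are in bijection with states at depth at least $\ell$, of which there are at most $n-1$; and the pigeonhole inequality $k^{\lceil \log_k n\rceil} \ge n > n-1$ then yields a word avoiding all such paths, whose image is confined to the at most $\ell$ proper suffixes of $w$.
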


Note that if a $n$-state literal Huffman decoder has a synchronizing word of length at most $O(\log n)$, this word can be found in polynomial time by examining all words of length up to $O(\log n)$. Thus, in further algorithms we will assume that the length of a shortest synchronizing word is greater than this value. Lemma \ref{lemma-logrank} stays that a word of rank at most $\lceil \log_k n \rceil$ can also be found in polynomial time.

\begin{theorem} \label{thm-logapprox}
	There exists a $O(\log n)$-approximation polynomial time algorithm for the {\sc Short Sync Word} problem for literal Huffman decoders.
\end{theorem}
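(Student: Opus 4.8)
The plan is to combine the log-rank word of Lemma \ref{lemma-logrank} with greedy pairwise merging, using the elementary fact that an optimal synchronizing word simultaneously bounds the cost of every single merge. Throughout I assume, as justified in the remark preceding the theorem, that the length $\mathrm{OPT}$ of a shortest synchronizing word exceeds $\lceil \log_k n\rceil$; otherwise the optimum is itself of length $O(\log n)$ and is found by examining all words of that length, of which there are only $k^{\lceil \log_k n\rceil} \le k\cdot n = O(kn)$, so this case is handled in polynomial time.

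First I would, by Lemma \ref{lemma-logrank}, obtain in polynomial time a word $w_0$ of length at most $r := \lceil \log_k n\rceil$ whose rank is at most $r$; concretely, enumerate all words of length $\lceil \log_k n\rceil$ and keep one of rank at most $r$, which exists by the lemma. Let $S \subseteq Q$ be the image of $Q$ under $w_0$, so that $|S| \le r$. The remaining task is to synchronize the set $S$, which now contains at most $O(\log n)$ states.

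To synchronize $S$ I would use greedy pair merging. Since $X$ is a finite maximal prefix code, its literal decoder is a complete synchronizing automaton; hence for any two states $p, q$ there is a word mapping them to a common state, and a shortest such word can be found in polynomial time by breadth-first search in the automaton of pairs. As long as the current set $S$ has size greater than one, I pick two of its states, compute a shortest word $v$ merging them, and replace $S$ by its image under $v$; each such step decreases $|S|$ by at least one, so at most $r - 1$ steps suffice. The crucial estimate is that every merging word has length at most $\mathrm{OPT}$: an optimal synchronizing word of length $\mathrm{OPT}$ sends all of $Q$ to a single state, hence in particular merges every pair, so the shortest word merging a given pair cannot be longer. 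Thus the whole merging phase contributes at most $(r - 1)\,\mathrm{OPT}$ letters.

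Putting the two phases together, the produced word has length at most $r + (r - 1)\,\mathrm{OPT}$, and using the standing assumption $r < \mathrm{OPT}$ this is at most $r\,\mathrm{OPT} = O(\log n)\cdot \mathrm{OPT}$, which is the claimed ratio; all steps run in polynomial time. I expect the only point requiring care to be the bound on the cost of a single merge, namely the observation that the optimal synchronizing word merges \emph{all} pairs at once, so that each greedy merge costs at most $\mathrm{OPT}$ rather than being controlled only by a worst-case cubic bound; the completeness of the literal decoder, which follows from maximality of the code, is what guarantees that these merging words always exist.
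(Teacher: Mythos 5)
Your proposal is correct and follows essentially the same route as the paper's proof: find a word of rank at most $\lceil \log_k n\rceil$ via Lemma \ref{lemma-logrank}, then greedily merge the surviving states pairwise, bounding each merge by the observation that an optimal synchronizing word must merge every pair. The only cosmetic difference is that you fold the exhaustive search for short synchronizing words into the proof itself, whereas the paper dispatches that case in the remark preceding the theorem.
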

\begin{proof}
	Let $A = (Q, \Sigma, \delta)$ be a literal Huffman decoder, and $|\Sigma| = k$. Let $w$ be a word of rank at most $\lceil \log_k n \rceil$ found as described above.
	Let $Q'$ be the image of $Q$ under the mapping defined by $w$, i.e.\ $Q' = \delta(Q, w)$.
	Define by $v$ the word subsequently merging pairs of states in $Q'$ with shortest possible words.
	Note than a shortest word synchronizing $A$ has to synchronize every pair of states, in particular, one that requires a longest word.
	Thus the length of $v$ is at most $\lceil \log_k n \rceil$ times greater than the length of a shortest word synchronizing $A$.
	Then the word $wv$ is a $O(\log n)$-approximation for the {\sc Short Sync Word} problem.
\end{proof}

\begin{theorem} \label{superpoly}
	For every $\varepsilon > 0$, there exists a $(1 + \varepsilon)$-approximation $O(n^{\log n})$-time algorithm for the problem {\sc Short Sync Word} for $n$-state literal Huffman decoders.
\end{theorem}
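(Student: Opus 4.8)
The plan is to combine a cheap rank-reduction preprocessing step with an exact breadth-first search that is kept tractable by the log-rank lemma. Write $k = |\Sigma|$ and let $\ell$ be the length of a shortest synchronizing word for the literal Huffman decoder $A = (Q,\Sigma,\delta)$. First I would dispose of the case of small $\ell$: fix a threshold $L = \lceil \lceil\log_k n\rceil/\varepsilon\rceil$ and examine all words of length at most $L$, returning a shortest synchronizing one if it exists. Since $k^{\lceil\log_k n\rceil} = O(n)$, there are only $O(n^{1/\varepsilon})$ such words (up to lower-order factors), so this step runs in polynomial time and returns an exactly optimal word whenever $\ell \le L$. Hence we may assume $\ell > L \ge \lceil\log_k n\rceil/\varepsilon$, that is, $\lceil\log_k n\rceil < \varepsilon\ell$.

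Next I would apply Lemma \ref{lemma-logrank} to find, in polynomial time, a word $w$ of length $\lceil\log_k n\rceil$ whose rank is at most $r := \lceil\log_k n\rceil$, and set $Q' = \delta(Q, w)$, so that $|Q'| \le r$. The key observation is that any word synchronizing $A$ also synchronizes every subset of $Q$: if $u$ maps $Q$ to a singleton $\{q\}$, then $\delta(Q', u) \subseteq \{q\}$, and since the decoder is complete this image is nonempty, hence equal to $\{q\}$. Consequently a shortest word synchronizing the small set $Q'$ has length at most $\ell$.

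It remains to synchronize $Q'$ optimally. I would run a breadth-first search in the subset automaton of $A$ starting from $Q'$ and halting upon reaching a singleton. Because the rank of a set never increases under $\delta$, every subset encountered has size at most $r$, so the search explores only subsets of $Q$ of size at most $r$, of which there are at most $\sum_{i=1}^{r}\binom{n}{i} \le n^{\lceil\log_k n\rceil}$. Since $\log_k n \le \log n$ for $k \ge 2$, this is $O(n^{\log n})$ up to a polynomial factor, and as each explored state has $k$ outgoing transitions the search runs in time $O(n^{\log n})$ and returns a shortest word $v$ synchronizing $Q'$, with $|v| \le \ell$. Outputting $wv$ yields a synchronizing word of length $|w| + |v| \le \lceil\log_k n\rceil + \ell < \varepsilon\ell + \ell = (1+\varepsilon)\ell$, as required.

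The main obstacle — and the reason the algorithm is quasi-polynomial rather than polynomial — is that synchronizing even the reduced set $Q'$ optimally seems to require exploring the exponentially many subsets reachable from it; Lemma \ref{lemma-logrank} is precisely what confines both the starting set and the entire search to subsets of size $O(\log_k n)$, capping their number at $n^{O(\log_k n)}$. The two balancing computations to verify are that the small-$\ell$ brute force stays polynomial, which hinges on the identity $k^{\log_k n} = n$ and is therefore independent of the alphabet size, and that the preprocessing overhead of $\lceil\log_k n\rceil$ letters is genuinely absorbed into the $\varepsilon\ell$ slack guaranteed by the threshold $L$.
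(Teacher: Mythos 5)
Your proposal is correct and follows essentially the same route as the paper's proof: a polynomial-time brute-force stage over words of length at most $\lceil\log_k n\rceil/\varepsilon$, then Lemma~\ref{lemma-logrank} to reduce to a set $Q'$ of size at most $\lceil\log_k n\rceil$, then an exact search in the power automaton restricted to subsets of size at most $\lceil\log_k n\rceil$, returning $wv$ with the same $(1+\varepsilon)$ accounting. The only differences are presentational (you spell out why $|v|\le\ell$ and why the search stays within small subsets, which the paper leaves implicit).
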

\begin{proof}
	Let $A = (Q, \Sigma, \delta)$ be a literal Huffman decoder, and $|\Sigma| = k$. First we check all words of length at most $\frac{1}{\varepsilon} \lceil \log_k n \rceil$, whether they are synchronizing.
	The number of these words is polynomial, and the check can be performed in polynomial time.
	If a synchronizing word is found then we have an exact solution.
	Otherwise, a shortest synchronizing word must be longer than that and we proceed to the second stage.
	
	Let $w$ be a word of rank at most $\lceil \log_k n \rceil$ found as before.
	Now we construct the power automaton $A^{\le \lceil \log_k n \rceil}$ restricted to all the subsets of size at most $\lceil \log_k n \rceil$.
	Using it, we find a shortest word synchronizing the subset $\delta(Q,w)$; let this word be $v$.
	We return $wv$.
	
	Let $w'$ be a shortest synchronizing word for $|A|$.
	Clearly, $|w'| \ge |v|$ and $\varepsilon |w'| > \lceil \log_k n \rceil$.
	Thus $|wv| \le (1+\varepsilon)|w'|$, so $wv$ is a $(1 + \varepsilon)$-approximation as required.
\end{proof}

In view of the presented results we propose the following conjecture.

\begin{conjecture}
	There exists an exact polynomial time algorithm for the {\sc Short Sync Word} problem for literal Huffman decoders.
\end{conjecture}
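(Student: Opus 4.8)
Because the statement is a conjecture rather than a theorem, the plan is to describe the most promising route to a polynomial-time exact algorithm and to isolate precisely the structural fact that is currently missing. The exact problem is a shortest-path computation in the power automaton: starting from the full set $Q$, find a shortest word driving it to a singleton. The point of departure is the observation that the $(1+\varepsilon)$-approximation of Theorem \ref{superpoly} is already exact \emph{except} for the additive length $\lceil \log_k n\rceil$ of the rank-reducing prefix $w$ supplied by Lemma \ref{lemma-logrank}: after forcing this prefix, a shortest word synchronizing $\delta(Q,w)$ is computed exactly by breadth-first search over subsets of size at most $\lceil\log_k n\rceil$. An exact algorithm must therefore dispense with the forced prefix and follow the genuine evolution of $Q$, so that the short prefix is no longer ``wasted'' but is allowed to make progress toward synchronization simultaneously with reducing the rank.

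The barrier to doing this directly is the size of the reachable part of the power automaton. While Lemma \ref{lemma-logrank} guarantees that the rank can be brought down to $\lceil\log_k n\rceil$ quickly, a shortest synchronizing word need not reduce the rank monotonically, and the number of subsets of size at most $\lceil\log_k n\rceil$ is already $n^{\Theta(\log n)}$ --- exactly the quasi-polynomial bound responsible for the $O(n^{\log n})$ running time of Theorem \ref{superpoly}. To break it I would exploit the prefix-tree structure of a literal decoder: every state is a proper prefix of a codeword, every letter either lengthens the current prefix or returns it to the root $\epsilon$, and once two states coincide they stay merged forever. The hope is that, thanks to this structure, every image $\delta(Q,u)$ admits a compact (polynomial-size) encoding and that the reachable images lying on some shortest synchronizing path number only polynomially many, so that breadth-first search over a compactly represented configuration space runs in polynomial time.

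Two concrete avenues seem worth pursuing. The first is to prove directly that the family $\{\delta(Q,u) : u \in \Sigma^*\}$ relevant to shortest synchronization has polynomial size or a polynomial-size description --- for instance that each such image is determined by a single decoding position together with a bounded-width frontier in the code tree. The second is to imitate the monotonic case \cite{Ryzhikov2017Monotonic}, where synchronization reduces to collapsing a linearly ordered image and is governed by adjacent pairs. Literal decoders are not monotonic, since the wrap-to-root transition destroys any naive order, but the tree order on prefixes is rigid, and one could try to show that a shortest synchronizing word is obtainable from an \emph{optimal schedule} of pairwise merges, thereby upgrading the crude ``merge pairs one at a time'' argument behind the $O(\log n)$-approximation of Theorem \ref{thm-logapprox} to an optimal overlapping schedule computable by polynomially many shortest-path computations over the $O(n^2)$ pairs of states.

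The hard part, and the reason the statement remains a conjecture, is exactly this last structural claim: controlling the interaction between \emph{simultaneous} merges. Neither the rank-reduction of Lemma \ref{lemma-logrank} nor the sequential pair-merging behind Theorem \ref{thm-logapprox} says anything about how several pairs collapse under a shared letter, and it is precisely this interaction --- equivalently, a polynomial bound on the relevant reachable subsets, or the validity of a reduction to pairs --- that must be established to pass from the quasi-polynomial algorithm of Theorem \ref{superpoly} to a polynomial one.
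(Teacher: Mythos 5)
The statement you were assigned is a \emph{conjecture}: the paper offers no proof of it, so there is no argument of the authors to compare yours against. Your write-up is honest on this point --- it does not claim to establish the conjecture --- and what it asserts about the surrounding results is accurate: Theorem \ref{superpoly} is exact except for the forced rank-reducing prefix supplied by Lemma \ref{lemma-logrank}; its $O(n^{\log n})$ cost does come from enumerating the $n^{\Theta(\log n)}$ subsets of size at most $\lceil\log_k n\rceil$; and the $O(\log n)$-approximation of Theorem \ref{thm-logapprox} loses exactly the factor incurred by merging pairs one at a time. Your identification of what a proof would need --- a polynomial bound (or polynomial-size encoding) on the images $\delta(Q,u)$ relevant to shortest synchronization, or a valid reduction to an optimal schedule of pairwise merges --- is a fair statement of the open structural question.

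Two cautions. First, in a complete deterministic automaton the rank is automatically non-increasing along any word, so your phrase ``a shortest synchronizing word need not reduce the rank monotonically'' is misleading as written; the genuine obstacle is that a shortest word may keep the image large for many steps, so the search cannot be confined a priori to small subsets, and the family of images reachable from $Q$ can be exponential. Second, and decisively, your proposal remains a program rather than a proof: neither of your two avenues is carried out, and the structural claim you isolate (controlling simultaneous merges, or polynomially bounding the relevant reachable images) is precisely the open problem. The conjecture therefore stands exactly where the paper leaves it, and your text should be read as a discussion of why it is open, not as progress toward resolving it.
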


Finally, we remark that it is possible to define the notion of the literal automaton of a non-maximal finite prefix code in the same way. In this case we leave undefined the transitions for a state $w$ and a letter $a$ such that $w$ is a proper prefix of a codeword, but $wa$ is neither a proper prefix of a codeword nor a codeword itself. However, the statement of Lemma \ref{lemma-logrank} is false for partial automata. Indeed, consider a two-word prefix code $\{(0^n1^n)^n, (1^n0^n)^n\}$. Its literal automaton has $2n^2 - 1$ states, and a shortest synchronizing word for it is $0^{n + 1}$ of length $n + 1$. Every word of length at most $n$ which is defined for at least one state is of the form $0^*1^*$ or $1^*0^*$ and thus has rank at least~$n - 1$.  

\section{Mortal and Avoiding Words} \label{sect-mortal}

A word $w$ is called {\em mortal} for a partial automaton $A$ if its mapping is undefined for all the states of $A$. The techniques described in this paper can be easily adapted to get the same inapproximability for the {\sc Short Mortal Word} problem defined as follows. 

\begin{tabular}{||p{32em}}
	~{\sc Short Mortal Word} \\
	~{\em Input}: A partial automaton $A$ with at least one undefined transition;\\
	~{\em Output}: The length of a shortest mortal word for $A$.
\end{tabular}

This problem is connected for instance to the famous Restivo's conjecture \cite{Restivo1981}. 

\begin{theorem}
	Unless P = NP, the {\sc Short Mortal Word} problem cannot be approximated in polynomial time within a factor of
	
	(i) $n^{1 - \varepsilon}$ for every $\varepsilon > 0$ for $n$-state binary strongly connected partial automata;
	
	(ii) $c \log n$ for some $c > 0$ for $n$-state binary partial Huffman decoders.
\end{theorem}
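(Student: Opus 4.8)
The plan is to use a robust correspondence between synchronizing and mortal words that is already implicit in all of the constructions above. In each of them the active states are forced to collapse onto a single sink (the state $s$ of Theorem \ref{tm-sc}, the state $f$ of Theorem \ref{sa-set-cover}). If that sink is fixed by every letter, then deleting its outgoing transitions turns it into a dead state, and a word synchronizing the automaton onto the sink becomes, after appending at most one further letter, a mortal word; conversely every mortal word must first funnel all active states into the now-dead sink. Hence the length of a shortest mortal word equals that of a shortest synchronizing word up to an additive constant, and every gap proved for {\sc Short Sync Word} transfers to {\sc Short Mortal Word}. This applies directly when the collapse sink is genuinely fixed by all letters, as for $f$ in Theorem \ref{sa-set-cover}, but it needs adaptation when strong connectivity forbids a dead state, as in Theorem \ref{tm-sc}, where the letter $c$ deliberately lets one leave $s$.

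For (i) I would start from the binary strongly connected automaton $A$ of Theorem \ref{tm-sc}. Since a genuinely dead state cannot occur in a strongly connected automaton, I would pull the deletion one step back, onto the states that feed the sink: writing $t$ for the reset letter and $0,1$ for the navigation letters, at each satisfying leaf of every gadget I leave $0,1$ undefined while keeping $t$ (mapping the leaf to the root of its gadget) and $c$ defined. The satisfying leaves thus become dead \emph{for the collapse} yet stay reachable from and able to reach the rest of $A$, so strong connectivity is preserved, and the automaton remains partial and binary after the same binarization (Lemma~3 of \cite{Berlinkov2014}) used in Theorem \ref{tm-sc}. A satisfying assignment then yields the mortal word $t\,x_1\cdots x_N\,0$ of length $O(M^{\varepsilon})$, whereas for an unsatisfiable instance each aligned block of $N$ navigation letters can route to their satisfying leaves the tokens of at most $M^{\varepsilon}$ constraints, so killing all of them forces length $\ge N\,M^{1-\varepsilon}\ge M^{1-\varepsilon}$, reproducing the gap of Theorem \ref{tm-sc}.

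For (ii) I would instead start from the strongly acyclic {\sc Set Cover} automaton of Theorem \ref{sa-set-cover}, whose unique sink $f$ is fixed by every letter; here no strong connectivity is required, so I can delete all outgoing transitions of $f$ outright, making the shortest mortal word equal to the minimum set-cover size up to an additive constant and giving a $\Theta(\log n)$ gap over an alphabet of size $m\le p^{1+\gamma}$. Next I would turn the result into a partial Huffman decoder by the mortal analog of the branching construction of Lemma \ref{sa-to-huff}, supplemented by the chains of Theorem \ref{thm-partial-inapprox} that forbid applying the reset letter too early; a dead sink is unproblematic because the underlying prefix code is now non-maximal and the decoder is partial. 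Finally I would reduce to a binary alphabet by composing, through the mortal analog of Lemma \ref{prop-composition-partial}, with the small maximal prefix code $0\{0,1\}^{\log k-1}\cup 1\{0,1\}^{\log k}$; exactly as in Corollary \ref{binary-huffman}, and using that we may assume the optimum is $\Omega(\log n)$, this alters the length only by a constant factor and preserves the $\Theta(\log n)$ gap.

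The main obstacle is the strong-connectivity requirement in (i): one must verify that localizing the undefinedness to the navigation letters at the satisfying leaves opens no shortcut that kills the tokens faster than the $M^{1-\varepsilon}$ bound, that is, that the reset and cyclic letters, which keep the automaton connected, cannot be exploited to collapse the gadgets more quickly, so that the lower-bound argument of Theorem \ref{tm-sc} still applies. A secondary point to check in (ii) is that deleting $f$ preserves the first-return structure defining a finite prefix code and that the composition carries mortal words through the alphabet reduction with only constant-factor distortion; the latter follows from the same injectivity of $\beta$ used in Lemma \ref{prop-composition-partial}, since $\beta$ and $\beta^{-1}$ map mortal words to mortal words while scaling their length only between the factors $m$ and $M$.
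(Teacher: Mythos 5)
Your opening principle --- make the collapse point dead so that synchronizing-word gaps transfer to mortal-word gaps --- is exactly the paper's insight, but the paper implements it with a far lighter, uniform modification that you miss, and the route you take instead has genuine holes. The paper takes the \emph{final} automata of Theorem~\ref{tm-sc} and of Corollary~\ref{binary-huffman} (both already binary, and both with a state $s$ that every state must visit before synchronization), adds a twin state $s'$ carrying the old outgoing transitions of $s$, and redefines $s$ so that exactly one letter is defined at $s$, mapping it to $s'$. No re-binarization, no new composition lemma, and no surgery inside the gadgets is needed: strong connectivity survives (giving (i)); the automaton remains a partial Huffman decoder, since the first-return code at $s$ becomes $a_0X$ where $X$ is the original code and $a_0$ the one defined letter (giving (ii)); and both the upper and lower estimates transfer because a token can die only after visiting $s$.

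The concrete gaps in your version are these. For (i), you yourself leave the decisive step --- that the reset and cyclic letters open no shortcut, i.e.\ that the $N M^{1-\varepsilon}$ lower bound still holds after moving the undefinedness to the satisfying leaves --- as an unverified ``obstacle,'' so the proof is incomplete exactly where it matters; moreover, Lemma~3 of \cite{Berlinkov2014} is a statement about complete automata, and you apply it to a partial automaton without arguing that it preserves undefined transitions, strong connectivity and the gap; and you never say what happens to $s$ itself (if it is kept with no incoming transitions, strong connectivity fails; it must be deleted). For (ii), the claim that ``a dead sink is unproblematic because the code is non-maximal and the decoder is partial'' is false: by the paper's definition a partial Huffman decoder is a strongly connected partial automaton whose first-return words at some state form a finite prefix code, so every state must still be able to return to the root; a dead $f$ violates this, and you would be forced to give $f$ outgoing transitions (e.g.\ the $b_1,b_2$ tree) and localize undefinedness to the $\Sigma$-letters at $f$ --- a different construction from the one you describe. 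Finally, the ``mortal analog of Lemma~\ref{prop-composition-partial}'' is asserted rather than proved, and your one-line justification does not even type-check: a mortal word for the composed decoder need not lie in $Z^*$ (it may kill the last surviving tokens in the middle of a $Z$-block), so $\beta^{-1}$ cannot be applied to it, and the two-sided length correspondence for mortal words under composition would have to be stated and proved separately. The chains you import from Theorem~\ref{thm-partial-inapprox} also serve no identified purpose for mortality, which suggests this part of the construction has not actually been worked out.
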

\begin{proof}
It can be seen that in Theorem \ref{tm-sc} and Corollary \ref{binary-huffman} we construct an automaton with a state $s$ such that each state has to visit $s$ before synchronization. Introduce a new state $s'$ having all the transitions the same as $s$, and for $s$ set the only defined transition (for an arbitrary letter) to map to $s'$. Thus we get an automaton such that every mortal word has to map each state to $s$ before mapping it to nowhere. Thus we preserve all the estimations on the length of a shortest mortal word, which proves both statements.
\end{proof}

Moreover, it is easy to get a $O(\log n)$-approximation polynomial time algorithm for {\sc Short Mortal Word} for literal Huffman decoders following the idea of Theorem \ref{thm-logapprox}. Indeed, it follows from Lemma \ref{lemma-logrank} that either there exists a mortal word of length at most $\lceil \log_k n \rceil$, or there exists a word $w$ of rank at most $\lceil \log_k n \rceil$. In the latter case we can find a word which is a concatenation of $w$ and a shortest word mapping all this states to nowhere one by one. By the arguments similar to the proof of Theorem \ref{thm-logapprox} we then get the following.

\begin{proposition}
	There exists a $O(\log n)$-approximation polynomial time algorithm for the {\sc Short Mortal Word} problem for $n$-state literal Huffman decoders. This algorithm always finds a mortal word of length $O(n \log n)$.
\end{proposition}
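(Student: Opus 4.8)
The plan is to imitate the algorithm of Theorem~\ref{thm-logapprox}, replacing its "merge a pair of states" step by a "kill a single state" step. Let $A=(Q,\Sigma,\delta)$ be the input decoder, $n=|Q|$, $k=|\Sigma|$, and let $\ell^*$ be the length of a shortest mortal word. First I would invoke the dichotomy derived above from Lemma~\ref{lemma-logrank}: in polynomial time one obtains a word $w$ of length $\lceil\log_k n\rceil$ that is either mortal or has rank at most $\lceil\log_k n\rceil$. If $w$ is mortal, I output it; its length is $O(\log n)$, so it is at once an $O(\log n)$-approximation and a word of length $O(n\log n)$, and we are done. Otherwise set $Q'=\delta(Q,w)$, the set of states on which $w$ stays defined, so that $|Q'|\le\lceil\log_k n\rceil$.

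Next I would kill the surviving states one at a time. Maintaining the current active set $P$ (initially $Q'$), while $P\neq\emptyset$ I pick any $q\in P$ and append a shortest word $u$ with $\delta(q,u)$ undefined, found by a breadth-first search in the transition graph of $A$. Appending $u$ kills $q$, and since $A$ is deterministic the image of the remaining states has size at most $|P|-1$; hence $P$ strictly shrinks each round and the loop stops after at most $|Q'|\le\lceil\log_k n\rceil$ rounds. Concatenating $w$ with all the words $u$ yields a mortal word.

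The two bounds then follow. For the ratio, observe that a shortest mortal word $w^*$ must kill every individual state: for each state $q$ some prefix of $w^*$ already sends $q$ to an undefined transition, so the shortest word killing $q$ has length at most $\ell^*$. Thus every appended $u$ has length at most $\ell^*$, and the total length is at most $\lceil\log_k n\rceil+\lceil\log_k n\rceil\cdot\ell^*=O(\ell^*\log n)$, the claimed $O(\log n)$-approximation. For the absolute bound, a shortest word killing a single state is a shortest path to an undefined transition and so has length at most $n$; together with $|w|=\lceil\log_k n\rceil$ this gives total length at most $\lceil\log_k n\rceil(1+n)=O(n\log n)$.

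I expect the only delicate point to be the first step. As the preceding two-word example shows, Lemma~\ref{lemma-logrank} fails verbatim for partial automata, so the dichotomy needs justification: I would pass to the complete literal decoder obtained by redirecting every dead-end transition to the root $\epsilon$ (this adds only codewords, hence no new states), apply Lemma~\ref{lemma-logrank} there to get $w$, and note that in the original automaton each state either reproduces the same image or dies. Consequently $\delta(Q,w)$ is contained in the image computed in the completion, so it has size at most $\lceil\log_k n\rceil$ and is empty exactly when $w$ is mortal — which is precisely the required alternative.
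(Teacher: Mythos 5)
Your proposal is correct and follows essentially the same route as the paper: use Lemma~\ref{lemma-logrank} to obtain a word that is either mortal or has rank at most $\lceil\log_k n\rceil$, then kill the at most $\lceil\log_k n\rceil$ surviving states one by one with shortest killing words, each bounded by the optimum (giving the $O(\log n)$ ratio) and by $n$ (giving the $O(n\log n)$ absolute bound), exactly as in the argument modelled on Theorem~\ref{thm-logapprox}. Your completion argument --- redirecting every undefined transition to the root, which adds codewords but no states, so that Lemma~\ref{lemma-logrank} applies and its image bounds the partial one --- is an explicit justification of the dichotomy that the paper invokes without proof, and it is a welcome addition given the paper's own caveat that the lemma fails verbatim for partial literal automata.
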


Another connected and important problem is to find a shortest avoiding word. Given an automaton $A = (Q, \Sigma, \delta)$, a word $w$ is called {\em avoiding} for a state $q \in Q$ if $q$ is not contained in the image of $Q$, that is, $q \not \in \delta(Q, w)$. Avoiding words play an important role in the recent improvement on the upper bound on the length of a shortest synchronizing word \cite{Szykula2018Cerny}. They are in some sense dual to synchronizing words.

\begin{tabular}{||p{32em}}
	~{\sc Short Avoiding word} \\
	~{\em Input}: An automaton $A$ and its state $q$ admitting a word avoiding $q$;\\
	~{\em Output}: The length of a shortest word avoiding $q$ in $A$.
\end{tabular}

If $q$ is not the root of a literal Huffman decoder $A$ (that is, not the state corresponding to the empty prefix), then a shortest avoiding word consists of just one letter. So avoiding is non-trivial only for the root state.

\begin{proposition}
	For every $\varepsilon > 0$, there exists a $(1 + \varepsilon)$-approximation $O(n^{\log n})$-time algorithm for the problem {\sc Short Sync Word} for $n$-state literal Huffman decoders.
\end{proposition}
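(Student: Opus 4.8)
The plan is to follow the same two-stage strategy as Theorem~\ref{superpoly} (indeed the statement here coincides with it), combining a bounded exhaustive search with a restricted power-automaton search. Let $A = (Q, \Sigma, \delta)$ be the given literal Huffman decoder and put $k = |\Sigma|$. First I would handle the regime where a shortest synchronizing word is short: examine every word of length at most $\frac{1}{\varepsilon}\lceil\log_k n\rceil$ and test whether it synchronizes $A$. The number of such words is $k^{(1/\varepsilon)\lceil\log_k n\rceil} = n^{O(1/\varepsilon)}$, which is polynomial for fixed $\varepsilon$, and each test runs in polynomial time. If one of them synchronizes $A$, I return it and obtain an exact optimum.

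Otherwise every shortest synchronizing word $w'$ satisfies $|w'| > \frac{1}{\varepsilon}\lceil\log_k n\rceil$, i.e.\ $\varepsilon|w'| > \lceil\log_k n\rceil$. I then apply Lemma~\ref{lemma-logrank} to compute, in polynomial time, a word $w$ of length $\lceil\log_k n\rceil$ and rank at most $\lceil\log_k n\rceil$, so that the image $\delta(Q, w)$ has size at most $\lceil\log_k n\rceil$. The core step is to find a shortest word $v$ synchronizing this small set $\delta(Q, w)$. For this I build the power automaton restricted to subsets of size at most $\lceil\log_k n\rceil$ and run a breadth-first search from $\delta(Q, w)$. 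The number of subsets of $Q$ of size at most $\lceil\log_k n\rceil$ is at most $\sum_{i=0}^{\lceil\log_k n\rceil}\binom{n}{i} = O(n^{\log n})$, so both the construction of this restricted power automaton and the search through it run in time $O(n^{\log n})$. I output $wv$.

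Correctness then reduces to two inequalities. Any synchronizing word for $A$ in particular collapses the subset $\delta(Q, w)$ to a single state, so the shortest such word for $\delta(Q, w)$ is no longer than $w'$, giving $|v| \le |w'|$. Combined with $|w| = \lceil\log_k n\rceil < \varepsilon|w'|$ from the second-stage hypothesis, this yields $|wv| = |w| + |v| < (1+\varepsilon)|w'|$, the required $(1+\varepsilon)$-approximation.

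The one point deserving care, which I regard as the main obstacle, is justifying that restricting the power automaton to small subsets loses nothing for the search started at $\delta(Q, w)$. Because $\delta$ is a (partial) function, applying any letter to a set can only keep its size the same or decrease it; hence every subset reachable from $\delta(Q, w)$ already has size at most $\lceil\log_k n\rceil$. Thus the restricted power automaton is closed under transitions from $\delta(Q, w)$, the breadth-first search never needs a discarded vertex, and the word $v$ it returns is genuinely a shortest word synchronizing $\delta(Q, w)$. With that observation the size and correctness bounds above go through, completing the argument.
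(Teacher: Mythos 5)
Your two-stage algorithm and its analysis are correct, and they coincide with the paper's proof of Theorem~\ref{superpoly}; the details you supply beyond the paper's write-up (the count $\sum_{i=0}^{\lceil\log_k n\rceil}\binom{n}{i} = O(n^{\log n})$ for the restricted power automaton, and the observation that images of sets under a deterministic transition function never grow, so the search from $\delta(Q,w)$ never leaves the restricted automaton) are exactly the right justifications for the steps the paper leaves implicit. However, there is a mismatch you should be aware of. As printed, this proposition duplicates Theorem~\ref{superpoly} verbatim, which is a typo: it sits in the section on mortal and avoiding words, and the paper's own proof (``we check whether the words are avoiding instead of synchronizing'') shows that the intended statement concerns \textsc{Short Avoiding Word} for the root state of a literal Huffman decoder, not \textsc{Short Sync Word}. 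Your argument therefore re-proves a theorem the paper has already established, rather than the result the proposition is meant to assert.

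The good news is that your argument adapts to the intended statement with two small changes. In the first stage, test each word of length at most $\frac{1}{\varepsilon}\lceil\log_k n\rceil$ for being avoiding rather than synchronizing. In the second stage, run the same breadth-first search in the power automaton restricted to subsets of size at most $\lceil\log_k n\rceil$, starting from $\delta(Q,w)$, but stop at any subset not containing the state $q$ instead of at a singleton. The only point requiring a new justification is the inequality $|v| \le |w'|$, where $w'$ is now a shortest avoiding word for $q$: since $\delta(Q, ww') = \delta(\delta(Q,w), w') \subseteq \delta(Q, w')$ and the latter set avoids $q$, the word $w'$ read from $\delta(Q,w)$ also avoids $q$, so the shortest word $v$ found by the search satisfies $|v| \le |w'|$. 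With this substitution, your bounds $|w| = \lceil\log_k n\rceil < \varepsilon|w'|$ and the $O(n^{\log n})$ running time carry over unchanged, giving the $(1+\varepsilon)$-approximation for \textsc{Short Avoiding Word}.
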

\begin{proof}
We use the same algorithm as in the proof of Theorem \ref{superpoly}.
The only difference is that we check whether the words are avoiding instead of synchronizing.
\end{proof}

\section{Concluding Remarks}

For prefix codes, a synchronizing word is usually required to map all the states to the root \cite{Berstel2010}. One can see that this property holds for all the constructions of the paper. Moreover, in all the constructions the length of a shortest synchronizing word is linear in the number of states of the automaton. Thus, if we restrict to this case, we still get the same inapproximability results. Also, it should be noted that all the inapproximability results are proved by providing a gap-preserving reduction, thus proving NP-hardness of approximating the {\sc Short Sync Word} problem within a given factor.

\bibliography{SyncBib}
\end{document}